\documentclass[11pt]{article}

\usepackage[utf8]{inputenc}

\usepackage[margin=2.5cm]{geometry}

\usepackage{amsmath,amssymb,amsthm}
\usepackage{mathtools}

\usepackage[linesnumbered,ruled,vlined]{algorithm2e}

\usepackage[svgnames,table]{xcolor}
\usepackage{graphicx}
\usepackage{tikz}
\usepackage{tikz-qtree}
\usetikzlibrary{arrows.meta}
\usepackage{pgfplots}
\pgfplotsset{compat=1.18}

\usepackage{forest}
\usepackage{tcolorbox}
\usepackage{subcaption}
\usetikzlibrary{decorations.pathreplacing,calc}

\tcbset{
    boxrule=0.8pt
}

\forestset{
    solid nodes/.style={
        for tree={circle,draw,inner sep=1,fill=green}
    },
    dir/.style={
        for tree={grow=#1}
    },
    leaf/.style={
        label=$#1$
    },
    mytree/.style={
        solid nodes,
        for tree={grow=south,s sep=1cm}
    }
}

\usepackage{array}
\usepackage{booktabs}
\usepackage{float}
\usepackage{hhline}
\usepackage{multirow}

\usepackage[round]{natbib}

\usepackage{nicefrac}
\usepackage{textcomp}
\usepackage{enumerate}
\usepackage{microtype}
\usepackage{tabto}

\usepackage[colorlinks=true,linkcolor=DarkBlue,citecolor=DarkBlue]{hyperref}

\usepackage{appendix}

\newtheorem{theorem}{Theorem}[section]
\newtheorem{claim}[theorem]{Claim}

\newtheorem{lemma}[theorem]{Lemma}

\theoremstyle{definition}
\newtheorem{definition}[theorem]{Definition}

\theoremstyle{remark}

\newcommand{\calN}{\mathcal{N}}

\definecolor{Blue}{HTML}{2D2F92}

\newcommand*{\eps}{\ensuremath{\epsilon}}

\newcommand{\notelist}{}

\NewDocumentCommand{\addnote}{mm}{
  \expandafter\gdef\expandafter\notelist\expandafter{%
    \notelist
    \noindent \hyperlink{#2}{#1}\par
  }
  \hypertarget{#2}{#1}%
}

\usepackage{amsmath,amsfonts,bm}

\def\eqref#1{equation~\ref{#1}}
\def\1{\bm{1}}

\def\eps{{\epsilon}}

\DeclareMathAlphabet{\mathsfit}{\encodingdefault}{\sfdefault}{m}{sl}
\SetMathAlphabet{\mathsfit}{bold}{\encodingdefault}{\sfdefault}{bx}{n}

\newcommand{\R}{\mathbb{R}}

\newtheorem{problem}[theorem]{Problem}

\newcommand{\eat}[1]{}

\newcommand{\Brac}[1]{\left[#1\right]}

 \newcommand{\set}[1]{\left\{ #1 \right\} }

\newcommand{\norm}[1]{\lVert #1 \rVert}

\newcommand{\bignorm}[1]{\big\lVert#1\big\rVert}

\newcommand{\bigsnormt}[1]{\bignorm{#1}^2_2}

\newcommand{\Iprod}[1]{\left\langle#1\right\rangle}

\newcommand{\Esymb}{\mathbb{E}}
\newcommand{\Psymb}{\mathbb{P}}

 \DeclareMathOperator*{\ProbOp}{\Psymb}

\renewcommand{\Pr}[1]{\ProbOp\left(#1\right)}

\renewcommand{\epsilon}{\varepsilon}

\newif\ifnotes\notestrue
\ifnotes
\usepackage{color}
\definecolor{mygrey}{gray}{0.50}
\newcommand{\notename}[2]{{\textcolor{blue}{\footnotesize{\bf (#1:} {#2}{\bf ) }}}}

\else

\newcommand{\notename}[2]{{}}

\fi

\newcommand{\VC}{\mathrm{VCdim}}

\title{Optimal VC Dimension of Contrastive Learning with Margin}

\author{Dionysis Arvanitakis\\Northwestern University \and Vaggos Chatziafratis\\UC Santa Cruz \and  Yiyuan Luo\\UC Santa Cruz\and Konstantin Makarychev\\Northwestern University}
\date{}

\begin{document}
\maketitle
\begin{abstract}
Contrastive learning is a successful paradigm for learning $d$-dimensional geometric representations from a collection of ``anchor--positive--negative'' triplets $(i,j^{+},k^{-})$, indicating that “item $i$ is closer to $j$ than to $k$.” Despite its success, understanding why contrastive learning leads to representations of high \textit{generalization} quality---beyond the often pessimistic predictions from PAC-learning---remains a central question. Recently, \citet*{alon2024optimal} proved that, for PAC-learning $d$-dimensional Euclidean representations of $n$-point datasets, $\Theta(\min(nd, n^2))$  triplets are necessary and sufficient, while they posed as an open question whether  their VC dimension bounds for the more realistic setting of \textit{contrastive learning with a margin} can be improved. For a margin parameter $\alpha >0$, a triplet $(i,j^{+},k^{-})_{\alpha}$ is satisfied by the embedding $\phi:[n]\rightarrow \mathbb{R}^{d}$, if $\|\phi(i)-\phi(k)\|_2>(1+\alpha)\cdot\|\phi(i)-\phi(j)\|_2$. In this work, we resolve their question by proving that the VC dimension of contrastive learning under any margin $\alpha\in(0,1)$ is in fact $O(n/\alpha^2)$, improving on the previous bound of $O(n\log(n)/\alpha^2)$. We also establish that the bounds are optimal up to constant factors, by providing a matching lower bound of $\Omega(\frac{n}{\alpha^2})$ (the previously known lower bound was $\Omega(\frac{n}{\alpha})$), for $\alpha\geq \max(n^{-1/2},d^{-1/2})$.
\end{abstract}

\newpage
\tableofcontents
\newpage
\thispagestyle{empty}
\section{Introduction}
%should we maybe say contrastive embeddings?
Contrastive learning is a powerful paradigm in representation learning, where the goal is to find meaningful representations, encoded as features in $d$-dimensional Euclidean space, allowing us to replace data point $x$ by its feature vector $\phi(x)\in \mathbb{R}^d$ in various downstream tasks. At the heart of contrastive learning lies the notion of a \textit{triplet} comparison $(i,j^+,k^-)$ indicating that “item $i$ is closer to $j$ than to $k$,” also known as ``anchor--positive--negative''~\citep{schroff2015facenet}. For example, modern pipelines are trained with a contrastive loss, e.g., Triplet or InfoNCE (with one or more negatives), mapping data into $d$-dimensional vectors by pulling anchor-positive pairs closer, while pushing anchor-negative pairs further apart, enabling a unified approach for diverse types of data, including text, audio, graph-based tasks, and images~\citep{schroff2015facenet,grover2016node2vec,oord2018representation,saunshi2019theoretical,chen2020simple}. 

Unfortunately, despite its empirical success, many theoretical aspects of contrastive learning are not well-understood. In this work, we focus on \textit{sample complexity} and \textit{generalization bounds}---a central question in PAC-learning---making progress towards understanding when a representation learned from anchor--positive--negative samples generalizes well to unseen (test) samples. 

%\dnote{Do we want to make the transition from previous paragraph a bit smoother? Right now it feels a bit weird that this formalization is not exactly capturing what is described in the previous paragraph.}

\subsection{Sample Complexity of Contrastive Learning} 

Following the framework by~\cite*{alon2024optimal}, let $\mathcal X$ be an $n$-point dataset and let $\phi:\mathcal{X}\to \mathbb{R}^{d}$ be an embedding that maps points into $d$-dimensional Euclidean space (we do not make any assumptions on $\phi$). Motivated by the anchor--positive--negative paradigm, typically we assume access to a collection of $m$ triplets; these are usually generated from data, such as nearby words in a document, or neighboring video frames, that serve as the anchor-positive pair, and then a random other data point plays the role of the negative example. For more on how to pick negative samples, and their role in downstream performance, please refer to~\cite{saunshi2019theoretical,robinson2020contrastive,awasthi2022more}.

\begin{definition}
  We say that the embedding $\phi$ \emph{satisfies} or \emph{is consistent with} input triplet $(i,j^{+},k^{-})$ if:% and a training set of $m$ triplets $S \subseteq [n]^3$, we seek an embedding $\phi:\mathcal X\rightarrow \mathbb{R}^{d}$ so that $i$ is closer to $j$ than $k$:
\[
||\phi(i) - \phi(k)||_2 > ||\phi(i) - \phi(j)||_2.
\]
%In this case, the embedding $\phi$ is said to \textit{satisfy} the input triplet $(i,j^+,k^-)$.
More generally, it is common to enforce a small \textit{margin} in the above distance comparison ~\citep{vankadara2023insights}, i.e., for a margin parameter $\alpha>0$, the embedding $\phi$ satisfies the triplet $(i,j^{+},k^-)_{\alpha}$ if:
\[
||\phi(i) - \phi(k)||_2 > (1+\alpha) ||\phi(i) - \phi(j)||_2.
\]  
\end{definition}

The basic question motivating our work concerns the sample complexity of contrastive learning: \textit{for a set of $n$ items, what is the smallest number $m^*(\varepsilon, \delta)$ of anchor--positive--negative samples required to PAC-learn an underlying ground-truth embedding $\phi(x)\in \mathbb{R}^d$?} 

In particular, we focus on the following problem in PAC-learning studied by~\citet*{alon2024optimal}: 

\begin{problem}[PAC-Learning an embedding $\phi(\cdot)$ from triplets~\cite{alon2024optimal}]\label{prob:PAC} Fix error parameters $\epsilon,\delta>0$.
Given i.i.d. samples $(i,j,k)$ drawn from a distribution $\mathcal{D}$ over $[n]^{3}$, and labeled according to embedding $\phi$ as either $(i,j^{+},k^-)$ or $(i,k^{+},j^{-})$, based on whether $||\phi(i) - \phi(j)||_2\lessgtr||\phi(i) - \phi(k)||_2$, the goal is to find an embedding $\hat{\phi}$\footnote{In the contrastive learning with margin setting studied below, learning might require an improper predictor.} that generalizes well to unseen triplets, i.e., with probability $1-\delta$, for a subsequent  sample $(i,j,k)$ sampled from $\mathcal{D}$:
\[
\mathbb{P}_{(i,j,k)\sim \mathcal D}[\text{ $\hat{\phi}$ disagrees with $\phi$ on $(i,j,k)$}]\le \epsilon.
\]
\end{problem}
In other words, for a test triplet query $(i,j,k)$, the learned embedding $\hat \phi$ must classify whether the anchor $i$ is closer to $j$ or to $k$ according to the (unknown) $\phi$, i.e., whether the triplet is $(i,j^+,k^-)$ or  $(i,k^+,j^-)$. Similarly, Problem~\ref{prob:PAC} can be studied in the agnostic case (where samples are not necessarily consistent with some embedding), or with quadruplet samples $(i,j,k,l)$ indicating whether $||\phi(i) - \phi(j)||_2\lessgtr||\phi(k) - \phi(l)||_2$ from the literature of ordinal embeddings~\citep{bilu2005monotone,alon2008ordinal,terada2014local,vankadara2023insights}.% As we will see, our results are obtained via novel VC-dimension bounds, and extend to these cases, improving bounds by~\cite{alon2024optimal}.  %In other words, we are interested in the standard PAC-learning generalization setting that bounds the probability of error on an (unseen) test triplet query $(i,j,k)$, sampled from the same distribution as $m$ input triplets (training set), where 
%maybe say somtething about phi hat? need to be careful that phi and had phi have different domains, one points the other triplets

%-generalization to unseen triplets so it is useful as it bound the probability of error and relevant for practice and theory

 %as mentioned above, the label of a triplet $(i,j,k)$ is either $(i,j^+,k^-)$ or $(i,k^+,j^-)$. %(see Section~\ref{sec:overview} for formal definitions of the concept classes).
%\vspace{-0.2cm}

\paragraph{Dimensionality-vs-Generalization.} Recently, Alon, Avdiukhin, Elboim, Fischer and Yaroslavtsev~\citep{alon2024optimal} showed that the VC-dimension of the learning task in Problem~\ref{prob:PAC} is $O(\min(nd, n^2))$, i.e., they proved that $m^*(\varepsilon, \delta)=O_{\varepsilon,\delta}(\min(nd, n^2))$ labeled triplets are sufficient\footnote{The  bound is $O(nd)$, since $d\le n$ without affecting pairwise distance comparisons~\citep{bilu2005monotone}.} for PAC-learning the embedding $\phi: \mathcal{X}\rightarrow \mathbb{R}^{d}$. Typically, $\phi$ stems from the output layer of a neural network, so they focused on the representation dimensionality $d$, a key parameter of interest often ranging in the thousands~\cite{radford2018improving,kusupati2022matryoshka,comanici2025gemini}. 

%managed to overcome the overall lack of meaningful generalization bounds in the context of contrastive learning. They 

%Unfortunately, given that the embedding $\phi(\cdot)$ typically stems from the output layer of a neural network, it is well-known that direct application of PAC-learning generalization bounds often leads to very loose bounds, due to the high capacity of modern deep learning architectures \dnote{I think we should be a bit careful here, whatever bounds we prove here for general embeddings, only better bounds hold for embeddings induced by neural nets.}. 

%Our Contribution: Generalization beyond JL lemma from O(n) sample
\vspace{-0.2cm}
\subsection{Our Contribution: Generalization from \texorpdfstring{$O(n)$}{O(n)} samples, under \textit{any} margin}
For the special case when the margin $\alpha=0$ between distances $||\phi(i) - \phi(j)||_2, ||\phi(i) - \phi(k)||_2$, \cite{alon2024optimal} managed to show a matching lower bound of $\Omega(\min(nd, n^2))$ for Problem~\ref{prob:PAC}. However, it is unclear what happens for the well-motivated case of contrastive learning with margin $\alpha>0$. Observe that $\alpha>0$ provides some gap between distances, so the dimension $d$ can be directly replaced by $O(\log n/\alpha^2)$ using standard tools, e.g., Johnson-Lindenstrauss (JL) lemma: as noted in~\citep{alon2024optimal}, JL yields $d=O(\log n/\epsilon^2)$, where $\epsilon$ is the distortion~\citep{dasgupta2003elementary}, so $O(\tfrac{n\log n}{\alpha^2})$ triplets suffice for generalization (set distortion $\epsilon\approx \alpha$). Improving this bound was raised as an open question by~\cite{alon2024optimal}:

\begin{center}
   \emph{For contrastive learning with large enough values of margin $\alpha>0$, \\ can the VC-dimension bound be improved beyond  $O(\tfrac{n\log n}{\alpha^2})$?} 
\end{center}

%\vnote{editing below, feel free to read/add comment up to here}

%While for the case of no margin ($\alpha=0$) this approach may be detrimental (evident by the exploding term as $\alpha\to0$), potentially flipping almost all triplet distance comparisons due to the inevitable $(1\pm\epsilon)$-distortion~\cite{bilu2005monotone,alon2008ordinal}, it is still unclear whether better 

\paragraph{Our Main Result.} We go beyond the above approach based on JL, and we obtain optimal $O(n)$ bounds in the VC dimension, thus resolving the above question for \textit{all} values of margin $\alpha>0$ (not necessarily large values as originally stated in~\cite{alon2024optimal}). 

Our main result is a stronger, linear in $n$ VC-bound of $O(\max\set{n/\alpha^2,n})$ for contrastive learning. This is coupled with a matching lower bound of $\Omega(\frac{n}{\alpha^2})$ (for $\alpha=\Omega(\max\set{n^{-1/2},d^{-1/2}})$), so when put together the two results give a tight characterization of the VC dimension in this setting. To make things formal, we need the following definition related to the notion of \textit{partial} concept classes from~\cite{alon2022theory}:

%\begin{definition}[Concept class $\mathcal{H}(n,d,\alpha)$ for embeddings with margin] Let $\mathcal X$ be an $n$-point dataset, let $\phi:[n]\to \mathbb{R}^{d}$ be an embedding, and let $\alpha>0$ be a margin parameter. Associated with embedding $\phi(\cdot)$ is its triplet classifier $h_{\phi}: [n]^{3} \to \set{+1,-1,\star}$, labeling\footnote{The symbol $\star$ is arbitrary, denoting that classifiers may not answer certain queries (e.g., if distances are not separated).} an input triplet as follows:
%    \[h_{\phi}(i,j,k) = \begin{cases} +1, & \text{if } \|\phi(i)-\phi(k)\|_2 > (1+\alpha)\|\phi(i)-\phi(j)\|_2, \\[6pt] -1, & \text{if } \|\phi(i)-\phi(j)\|_2 > (1+\alpha)\|\phi(i)-\phi(k)\|_2, \\[6pt] \phantom{-}\star, & \text{otherwise.} \end{cases}\] 
%    The concept class that contains all possible triplet classifiers is denoted as $\mathcal{H}(n,d,\alpha)$. 
%\end{definition}

\begin{definition}[Concept class $\mathcal{H}(n,d,\alpha)$ for embeddings with margin]\label{def: partial} Fix $\alpha>0$, and let $\phi:[n]\to \mathbb{R}^{d}$ be an embedding. Associated with embedding $\phi(\cdot)$ is its triplet classifier $h_{\phi}: [n]^{3} \to \set{+1,-1,\star}$, labeling\footnote{Symbol $\star$ is arbitrary, meaning classifiers may not answer certain queries (e.g., if distances are not separated).} an input triplet as follows:
    \[h_{\phi}(i,j,k) = \begin{cases} +1, & \text{if } \|\phi(i)-\phi(k)\|_2 > (1+\alpha)\|\phi(i)-\phi(j)\|_2, \\[6pt] -1, & \text{if } \|\phi(i)-\phi(j)\|_2 > (1+\alpha)\|\phi(i)-\phi(k)\|_2, \\[6pt] \phantom{-}\star, & \text{otherwise.} \end{cases}\] 
    The concept class containing all possible triplet classifiers is denoted\footnote{For the case of no margin, we simply write $\mathcal{H}(n,d)$. Note that for embeddings $\phi$, as in the definition, $\mathcal{H}(n,d)$ is a total concept class.} as $\mathcal{H}(n,d,\alpha)$. We only consider embeddings such that for every $i,j,k$ with $j\ne k$, $\|\phi(i)-\phi(j)\|\neq \|\phi(i)-\phi(k)\|$.
\end{definition}
\begin{theorem}[Linear VC-dimension for \textit{any} margin]\label{th:intro}
%Let $\alpha\in (0, 1].$ 
The VC-dimension of $\mathcal{H}(n,d,\alpha)$ is $O(\frac{n}{\alpha^2})$ for $\alpha\in(0,1)$ and $O(n)$ for $\alpha\geq 1$. Specifically, there exists absolute constant $C$ such that:
    \begin{align*} 
        \mathrm{VCdim}(\mathcal{H}(n,d,\alpha))\leq  C\cdot\max\bigl\{\frac{n}{\alpha^2},n\bigr\}.
    \end{align*}

Furthermore, assuming that $\alpha\geq\max(\frac{1}{\sqrt{n}}, \frac{1}{\sqrt{d}})$, the VC dimension is $\Omega(\frac{n}{\alpha^2})$, for $\alpha\leq 1$ and $\Omega(n)$ for $\alpha>1$. In particular there exists an absolute constant $c$ such that:
\begin{align*}
      \mathrm{VCdim}(\mathcal{H}(n,d,\alpha))\geq c\cdot \max\{\frac{n}{\alpha^2}, n\}.
\end{align*}
\end{theorem}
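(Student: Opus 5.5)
For the upper bound, I will bound the number of distinct $\{\pm1\}$-patterns that $\mathcal{H}(n,d,\alpha)$ can realize on a fixed set $S$ of $m$ triplets. The target is $2^{O(n/\alpha^2)}$. By the usual partial-class argument, $S$ is shattered only if all $2^m$ labelings are realized by some $\phi$ with no $\star$ on $S$, which forces $m=O(n/\alpha^2)$. So the task is a counting bound for ``robust'' sign patterns.

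First I normalize by scaling and apply Johnson--Lindenstrauss to replace $d$ with $k=O(\log n/\alpha^2)$. This loses exactly the $\log n$ factor that we need to remove, so JL alone is not enough. The step that should remove the logarithm is a \emph{net-and-localize} argument. For each anchor $i$, only relative distances matter. I will discretize each point's position relative to the scale set by its neighbours in $S$ into a net of $2^{O(1/\alpha^2)}$ cells.

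The more concrete plan is this. Fix a realizing $\phi$ with margin. Each triplet $(i,j,k)$ compares $\|\phi(i)-\phi(j)\|$ and $\|\phi(i)-\phi(k)\|$ with a ratio gap of $1+\alpha$. Rounding each distance to powers of $(1+\alpha/3)$ keeps the label. I will then encode $\phi$ by a dimension-free sketch: project with a random Gaussian matrix into $k=C/\alpha^2$ dimensions. For each triplet in $S$, the projection preserves the label with probability $1-e^{-\Omega(\alpha^2 k)}$, which can be made $\ge 0.99$. By averaging, some projection preserves at least $99\%$ of the labels on $S$. The number of sign patterns realizable by $n$ points in $\mathbb{R}^k$ on $m$ triplets is $(m)^{O(nk)}$ by Warren/Milnor bounds, since each triplet is a degree-2 polynomial sign. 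Each true labeling is within Hamming distance $0.01m$ of one of these patterns, so the number of labelings is at most $m^{O(nk)}\binom{m}{0.01m}2^{0.01m}$. For shattering this gives $m\le O(nk\log m)+0.1m$, i.e.\ $m=O((n/\alpha^2)\log(n/\alpha))$, which still contains a log. To kill it, I will replace Warren's bound by a cruder net count. After projection to $k$ dimensions, each point can be snapped to a net whose size depends only on $k$ and $\alpha$, and the count is organized per anchor. I expect this step to be the main obstacle: the net scale must be chosen per anchor $i$ relative to its nearest compared point, so that the total description length is $n\cdot O(k\log(1/\alpha))$ bits. A further fix would trade $\log(1/\alpha)$ for constants by choosing $k=C\log(1/\delta)/\alpha^2$ with a constant $\delta$ and applying the error-tolerant counting above. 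The case $\alpha\ge1$ follows from monotonicity in $\alpha$ by taking $\alpha=1$.

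For the lower bound, I will build a shattered set of size $\Omega(n/\alpha^2)$. I split the points into $n/2$ anchors and $n/2$ ``reference'' points. In blocks, I take $t=\Theta(1/\alpha^2)\le d$ reference points placed at the orthonormal vectors $e_1,\dots,e_t$, pair them with a constant number of anchors, and use triplets $(i,e_a,e_b)$ over pairs $(a,b)$ from a fixed matching of size $\Theta(t)$. Given target labels, I place the anchor at $\phi(i)=\beta\sum_a \sigma_a e_a$ with $\beta\approx 1/\sqrt t$. Then $\|\phi(i)-e_a\|^2=\|\phi(i)\|^2+1-2\beta\sigma_a$, and since $\|\phi(i)\|^2=\beta^2 t = O(1)$, the ratio of the two distances is $1+\Theta(\beta)=1+\Theta(\alpha)$, which gives the right margin. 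Each anchor then contributes $\Theta(1/\alpha^2)$ shattered triplets. Using $\Theta(n)$ anchors that share the same references (possible since the references are fixed) gives $\Omega(n/\alpha^2)$ in total. The hypotheses $\alpha\ge n^{-1/2}$ and $\alpha\ge d^{-1/2}$ are exactly what guarantee $t\le n/2$ and $t\le d$. For $\alpha>1$, I use a constant $t$ and scale the anchors instead.
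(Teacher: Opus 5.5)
\textbf{Upper bound: as written, the argument does not reach $O(n/\alpha^2)$.} Your own final count is $m=O\bigl(\frac{n}{\alpha^2}\log\frac{n}{\alpha}\bigr)$. That is no better than the JL bound the theorem is supposed to beat. The step you propose to remove the logarithm is not an argument you can rely on. Per-anchor nets fail because a point is compared from many anchors at unrelated scales. A net fixed relative to one anchor's nearest compared point does not control the comparisons made from other anchors. You also give no encoding of all $n$ positions in $n\cdot O(k\log(1/\alpha))$ bits that preserves every comparison: relative encodings accumulate error along chains, and ratios of scales are unbounded. The $\log(1/\delta)$ remark concerns the net resolution, not the $\log m$ term. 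So the upper bound is not proved by what you wrote.

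\textbf{The fix.} The logarithm is an artifact of your estimate $m^{O(nk)}$, and the detour is unnecessary. Warren's theorem bounds the number of sign patterns of $m\ge N$ polynomials of degree $D$ in $N$ variables by $(4eDm/N)^N$. With $D=2$ and $N=nk$ this is $(8em/(nk))^{nk}$, the same form that gives the log-free $O(nd)$ bound at $\alpha=0$. (If $m<nk$ there is nothing to prove.) Your Hamming-ball covering then gives
\[
2^{m}\le \Bigl(\frac{8em}{nk}\Bigr)^{nk}\,2^{H(0.01)m},
\]
that is, $(1-H(0.01))\,x\le \log_2(8ex)$ with $x=m/(nk)$. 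This forces $x<9$, so $m<9nk=O(n/\alpha^2)$ for $k=C/\alpha^2$. Ties after projection have probability $0$, since $u_t\neq\pm v_t$, so you can pick $G_y$ realizing a strict sign vector with at most $0.01m$ failures.

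\textbf{Comparison with the paper.} With this correction your route is valid and genuinely different from the paper's. The paper projects onto a single Gaussian direction and keeps only a $\frac12+\frac{\alpha}{5}$ fraction of the triplets, using the exact planar bound $\frac{2}{\pi}\arctan(1+\alpha)$. It then pays the $1/\alpha^2$ through the log-free Rademacher bound $c'\sqrt{\mathrm{VCdim}(\mathcal{H}(n,1))/m}$, which produces a labeling that no line embedding fits beyond $\frac12+\varepsilon$. You instead keep $99\%$ of the triplets in $\Theta(1/\alpha^2)$ dimensions and pay through the dimension in Warren's count. Your route avoids the chaining-based Rademacher bound; the price is a JL-type concentration estimate in place of the paper's exact two-dimensional computation.

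\textbf{Lower bound.} Your construction is correct and differs slightly from the paper's. The paper gives each anchor $i_t$ a partner $j_t$ at distance $\sqrt2$, compares it against every reference $k_l$, and places the anchor at $\gamma w$, giving $1/\gamma^2$ triplets per anchor. You compare matched pairs of references from the anchor, giving $t/2$ triplets per anchor with no partner point. With $\beta^2t=1$ the ratio is $\sqrt{(1+\beta)/(1-\beta)}>1+\beta\ge 1+\alpha$ once $t\le 1/\alpha^2$. Three small repairs are needed:
\begin{enumerate}
\item All $\Theta(n)$ anchors must share a single block of references. Blocks with $O(1)$ anchors each give only $O(n)$ triplets.
\item You need a small perturbation to meet the distinct-distance requirement in the definition of $\mathcal{H}(n,d,\alpha)$.
\item For large $\alpha$, rescaling $\beta$ in $\beta(\sigma_a e_a+\sigma_b e_b)$ caps the distance ratio at $1+\sqrt2$. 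Instead, place the anchor at (a perturbation of) the chosen reference, with one triplet per anchor, as the paper does on the line.
\end{enumerate}
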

We prove the upper bound of Theorem \ref{th:intro} in Section \ref{sec: proof of main theorem}, the proof of the lower bound follows the construction for linear classification with margin and is proven in Section \ref{sec: app lower bound proof}.

A few remarks are in order. First, our result  characterizes the VC-dimension of concept class $\mathcal{H}(n,d,\alpha)$ in the regime where $\alpha\geq \max(n^{-1/2}, d^{-1/2})$ as $\mathrm{VCdim}(\mathcal{H}(n,d,\alpha))=\Theta( \max(n,\frac{n}{\alpha^2}))$. Second, it is surprising that under a margin $\alpha$, dimensionality does not play a role: the larger the margin, the easier the learning task becomes, since the sought-after embedding $\phi$ is less refined, and a priori, one might expect that dimension $d$ plays a role on the samples required to avoid misclassification of test samples (perhaps this is why~\cite{alon2024optimal} originally asked the question for large margins as a simplification). 

\paragraph{Learning Bounds via Geometric Approximation.} An interesting aspect in our analysis that allows us to obtain the $O(n)$ VC-bound, is the use of geometric approximation algorithms aiming to satisfy as many of the input triplets as possible. In contrast to the approach taken in~\cite{alon2024optimal}, where they used combinatorial bounds on the number of sign-patterns of a system of polynomials, here we essentially reduce the $d$-dimensional learning problem to a $1$-dimensional setting, whose sample complexity we can bound.
On the one hand, we exploit the margin $\alpha>0$ to show that a simple one-dimensional projection satisfies $\tfrac12 +\Omega(\alpha)$ fraction of triplets, independently of dimension $d$, where $\tfrac12$-approximation is a trivial baseline achieved by a random embedding.
On the other hand, we use the probabilistic method to show that no set of size $m=\Omega(\frac{n}{\varepsilon^2})$ can be shattered on the line $d=1$, and more precisely, there exists a labeling of the triplet samples such that no embedding satisfies more than $\frac{1}{2}+\varepsilon$ of the triplets; since $\varepsilon$ can be taken to be sufficiently smaller than $\alpha$ this leads to a contradiction, giving us the upper bound of Theorem~\ref{th:intro}. Our approach is similar in spirit to the algorithm of \cite{blum2005random} (see also \cite{balcan2005pac}) for linear classification with margin which combines a random projection coupled with boosting. For formal definitions and a detailed technical overview see Section \ref{sec:overview} below.

\subsection{Further Related Work}
There are intense efforts to understand various aspects of contrastive learning, both from a theoretical and a practical perspective. For example, many well-known works study relations between transfer learning, inductive biases and contrastive learning~\citep{saunshi2019theoretical,saunshi2022understanding}, posing specific assumptions on latent classes or how contrastive triplets are generated, in order to try to explain the observed success of such methods~\citep{schroff2015facenet,oord2018representation,chen2020simple}.\begin{samepage} Other works focus more on optimization loss functions such as spectral contrastive loss with good downstream performance~\citep{haochen2021provable}, understanding the role of negative samples~\citep{robinson2020contrastive,awasthi2022more,lei2023generalization}\end{samepage}, adversarial training~\citep{zou2023generalization} and more. Here, the work closest to ours is~\cite{alon2024optimal}, which served as the main inspiration towards our sample complexity results. 

Moreover, the literature of non-metric embeddings, often called \textit{ordinal} embeddings or \textit{monotone} maps~\citep{bilu2005monotone,alon2008ordinal,terada2014local,emamjomeh2018adaptive,ghosh2019landmark,ghoshdastidar2019foundations,fan2020learning,avdiukhin2023tree,chatziafratis2024dimension,avdiukhin2024embedding}, is also closely related to our setting, as their main algorithmic task is, given a collection of comparisons between pairwise distances, to find an embedding into a suitable geometric space, e.g., tree metric,  ultrametric, or Euclidean space, so that as many of the given pairwise comparisons are satisfied. We note that, because humans are much faster at comparisons between objects, in contrast to providing numerical scores, such triplet comparisons are also widely-deployed in metric learning and crowdsourcing~\citep{schultz2003learning,tamuz2011adaptively,kulis2013metric}. Other types of comparisons include the quadruplet setting $(i,j,k,l)$ mentioned earlier, where comparisons are of the form ``item $i$ is closer to $j$, than item $k$ is to $l$,'' see survey~\cite{vankadara2023insights}. 

%Complexity of Finding Local Optima in Contrastive Learning

Finally, there are several results on the computational complexity and approximation algorithms literature associated with the empirical risk minimizer on a given sample. Given $m$ (non-contradictory) triplet constraints on $n$ items,~\cite{bilu2005monotone} show that $n-1$ dimensions always suffice to satisfy them, while $\Omega(n)$ are necessary. In~\cite{avdiukhin2024embedding}, they present an algorithm that achieves dimension $\sqrt{m}$ and further improve it via the notion of graph arboricity. For the case of non-realizable instances, recent work by~\cite{arvanitakis2026provableaccuracycollapseembeddingbased} showed that it is computational intractable to obtain a better than $\tfrac12$-approximation (assuming the Unique Games Conjecture~\citep{khot2002power}), showing that the problem of maximizing satisfied triplets is approximation resistant~\citep{haastad2001some,hast2005beating,guruswami2008beating}. 

Regarding more practical approaches based on local search, including gradient-based methods, it is known that in worst-case, finding even a locally optimum solution is intractable, as captured by PLS-completeness\footnote{Both PLS and CLS are complexity classes related to finding local optima via local search methods, see~\cite*{johnson1988easy, daskalakis2011continuous}.} for discrete settings, and CLS-completeness for continuous embedding settings~\citep{yan2026complexity}. However, it is reported in~\cite{vankadara2023insights}, that popular methods usually obtain near-zero training error for a variety of triplet loss objectives (see also~\citep{bower2018landscape} on spurious local optima vs global optima).

%also known as ordinal embeddings~\cite{bilu2005monotone,alon2008ordinal,vankadara2023insights} -contrastive embeddings, dimensionality tradeoffs, tree learning bounds from grigory, -representation learning generally, contrastive learning success -importance of dimension, yet in practice we get good generalization obviously. can we explain? we show under any margin, we get sample complexity independent of dimension -basic setting with triplets, pulling similar together, or apart - Luxburg survey and ordinal embeddings generally, and why JL fails - hard negatives paper and awasthi negatives paper %- is the bound tight for ell_1? -basic question: sample complexity, generalization: state main motivating question: can we explain success of representation contrastive learning? find paper that indicate regardless of the dimension they generalize well -state prior work Alon the main theorem and bottleneck with dimension -Contrastive Learning with a margin:  explain JL gives dimension logn dimension and how we will bypass beyond JL -contributions state main theorem, extensions to other related settings covering Alon et al., -describe a bit technical idea and consequences
\section{Technical 
Overview}\label{sec:overview}

We present the main proof strategy and technical ideas behind our improved $O(n/\alpha^2)$ bound on the VC-dimension for the concept class $\mathcal{H}(n,d,\alpha)$ of triplet classifiers with margin.
\subsection{Our approach for \texorpdfstring{$\alpha>0$}{α>0}: Algorithmic Advantage over Random}

To give some intuition and set notation, we first describe the approach taken in~\cite{alon2024optimal} for the case of zero margin, and why it fails in our case when margin $\alpha>0$.

\paragraph{Sign-Patterns of Polynomials vs. margin $\alpha>0$.} Given that each triplet $(i,j^+,k^-)$ imposes a \textit{geometric} constraint on how the representation $\phi$ embeds the $n$ data points, a first attempt would be to analyze how $\phi$ partitions $d$-dimensional Euclidean space, and count the regions wherein triplet classifiers do not switch their label. Indeed,~\cite{alon2024optimal} observed that each triplet is satisfied, if and only if the following polynomial $Q_{(i,j,k)}$ on $3\cdot d$ number of variables, and of degree $2$ is positive: 
\[
Q_{(i,j,k)}(\phi(i), \phi(j),\phi(k)) = \|\phi(i)-\phi(k)\|_2^2 - \|\phi(i)-\phi(j)\|_2^2
\] 
In total, this leads to $m$ such polynomials (as there are $m$ input triplets), with the total number of variables being $n\cdot d$. Then,~\cite{alon2024optimal} used known results from algebraic combinatorics to upper bound the different sign-patterns of the resulting system $\set{Q_{(i_t,j_t,k_t)}}_{t=1}^m$ of $m$ polynomials on $n\cdot d$ variables~\citep{warren1968lower,alon1985geometrical}. This limits the size of a set that can be shattered giving an upper bound on the VC-dimension (see definitions below).

While the above approach worked for the case of no margin $\alpha=0$, and yielded the bound $\Theta(nd)$, a similar approach for $\alpha>0$ would still yield the same (dimension-dependent) bound $\Theta(nd)$: notice that if instead we defined the polynomial
\[
Q'_{(i,j,k)}(\phi(i), \phi(j),\phi(k)) = \|\phi(i)-\phi(k)\|_2^2 - (1+\alpha)^2\|\phi(i)-\phi(j)\|_2^2,
\] 
the number of sign-patterns does not change, as those depend solely on $m$, the degree of the polynomials and the number of variables.

We take a different approach to exploit the margin through geometric approximation algorithms that estimate triplet labels induced by $\phi(i),\phi(j),\phi(k)\in \mathbb{R}^d$. On the one hand, the margin allows a $1$-dimensional random projection algorithm to get slightly better-than-random approximations, specifically satisfying $\approx\tfrac12+\Omega(\alpha)$ fraction of the triplets. This is crucially \textit{independent} of what the dimension $d$ is, and is the reason why our final bound in Theorem~\ref{th:intro} does not have a dependence on $d$. On the other hand, if the VC-dimension was substantially larger than $O(n)$, and the labels of the training set were chosen at random to be either $+1$ or $-1$, then no embedding could fit the labels of the input triplets, better than random chance, which leads to a contradiction.

\subsection{Rounding-vs-Noisy Labels in 1D}

To make the above precise, we remind the reader of the standard definitions of \textit{VC-dimension} and \textit{Shattering} of a set ~\citep{shalev2014understanding}, adapted to our setting. The appropriate framework for our results is that of ~\cite{alon2022theory}, where hypothesis functions can be \emph{undefined} on certain inputs (this corresponds to the asterisk case in Definition \ref{def: partial}).
\begin{definition}[Shattering]
We say that a set $S=\set{(i_t,j_t,k_t)}_{t=1}^{m}$ is shattered by $\mathcal{H}(n,d,\alpha)$ if for every $y: [m]\to \set{-1,1}$ there exists a concept function $h_{\phi}\in \mathcal{H}(n,d,\alpha)$ such that for every $t\in [m]$, $h_{\phi}(i_t,j_t,k_t)=y(t)$.
\end{definition}
Notice that in the definition, the concept is not allowed to output the special symbol $\star$ on any of the points. We can now give the definition of the VC dimension of the hypothesis class $\mathcal{H}(n,d,\alpha)$.
\begin{definition}[VC-dimension]
    The VC-dimension of hypothesis class $\mathcal{H}$ is the largest number $m$ such that there exists a set of size $m$ that is shattered. We will use $\mathrm{VCdim}(\mathcal{H})$ to denote the VC dimension of hypothesis class $\mathcal{H}$. 
\end{definition}

%\dnote{I don't know if we need this, I think it is explained earlier well}
%The main idea in the $O\left(\frac{n\log(n)}{\alpha^2}\right)$ bound of \cite{alon2024optimal} is that by the Johnson Lindenstrauss lemma, there is never a need to go beyond $O\left(\frac{\log(n)}{\alpha^2}\right)$: because of the separation, any embedding that satisfies a set of triplet constraints in $d$ dimensions can be projected to $O\left(\frac{\log(n)}{\alpha^2}\right)$ so that the distances are preserved well enough that all the constraints are satisfied. Replacing $d$ with $\approx \frac{\log(n)}{\alpha^2}$ in the upper bound $O(dn)$ for VC dimension of embeddings to $d$ dimensional space gives the result.

%\dnote{Say about shattering, defnition of VC dimension etc here??} \vnote{yes i think we can delete section 3 and more everything here essentially}
We directly apply the definition of the VC dimension and shattering to prove our main theorem. What we need to show is that for $m\approx \frac{n}{\alpha^2}$, no set of $m$ unlabeled triplets $\set{(i_t,j_t,k_t)}_{t=1}^{m}$ is shattered.
Equivalently, starting with $m$ arbitrary triplets $\set{(i_t,j_t,k_t)}_{t=1}^{m}$ (one can think of them as being selected by an adversary) we need to show that there is a labeling of the triplets such that no embedding can satisfy all of them.
We let $\varepsilon=\Theta(\alpha)$ be a sufficiently small constant fraction of $\alpha$ and let $m=\frac{n}{\varepsilon^2}=\Theta(\frac{n}{\alpha^2})$: we show that for a set $\set{(i_t,j_t,k_t)}_{t=1}^{m}$ of  unlabeled triplets there exists a labeling (i.e., every unlabeled triplet is labeled as either $(i_t,j^{+}_t,k_t^{-})_{\alpha}$ or $(i_t,k_t^{+},j_t^{-})_{\alpha}$) such that no embedding $\phi:[n]\to \mathbb{R}^{n}$ is consistent with all the labeled triplets.

\paragraph{Rounding to 1D.} Our proof, perhaps surprisingly, begins by an observation drawing ideas from rounding semidefinite programs in the approximation algorithms literature. Suppose that we have a set of triplet constraints with margin $\alpha$, $\set{(i_t,j_t^+,k_t^-)_{\alpha}}_{t=1}^{m}$ and an embedding $\phi:[n]\to \mathbb{R}^{n}$ that satisfies all the constraints. Then, as we show, one can \emph{round} the embedding $\phi$ to a \emph{one-dimensional embedding} $\hat{\phi}:[n]\to\mathbb{R}$ so that a fraction of at least $\frac{1}{2}+\Omega({\alpha})$ of the triplet constraints without margin $\set{(i_t,j_t^{+},k_t^{-})}_{t=1}^{m}$ are satisfied by $\hat{\phi}$. 
A priori this result seems weak: not only the constraints that $\hat{\phi}$ satisfies are without margin, but also only a fraction of $\frac{1}{2}+\Omega(\alpha)$ of them are satisfied.\footnote{For comparison, a trivial one-dimensional embedding that maps every item to a uniform number in $[0,1]$ satisfies half of the constraints in expectation.}
The fact, however, that there always exists such an embedding in one dimension that is strictly better than the random solution ends up being strong enough to give us the result. Our rounding algorithm projects all the embedded vectors along a random direction in a way similar to the MAX-CUT rounding algorithm of ~\cite{goemans1995improved}, see Lemma \ref{lem:better than random} for the formal statement and proof.

\paragraph{No 1D embedding can fit noise on $n/\varepsilon^2$ triplets.} 
We now go back to our setting where we are given a set of $m=\frac{n}{\varepsilon^2}$ unlabeled triplets, adversarially selected, and we want to show that there is a labeling of the triplets such that no embedding, even in $n$ dimensions, can satisfy \textit{all} of them. The key idea is that our rounding algorithm allows us to focus on the real line and forget about the margin parameter.
Recall that ~\cite{alon2024optimal} show that the VC-dimension of embeddings in dimension $d$ is $\Theta(nd)$, and so for $d=1$, we get that the VC dimension is $\approx n$.
What this means in particular, is that no embedding on the real line is able to fit $\approx \frac{n}{\varepsilon^2}$ triplets labeled by random noise: if we select the labels of a set of $m={n}/{\varepsilon^2}$ triplets randomly then  with constant probability, no embedding will be able to satisfy more than $\frac{1}{2}+\varepsilon$ of them. This is formalized by the notion of Rademacher complexity and its bound in terms of the VC-dimension of a hypothesis class ~\citep{shalev2014understanding,bousquet2003introduction}. Using the probabilistic method, we get that for every set of $m=\frac{n}{\varepsilon^2}$ unlabeled triplets, there exists a labeling such that no embedding on the real line satisfies more than $(\frac{1}{2}+\varepsilon)m$ of them. The formal statement and proof can be found in Lemma \ref{lem: Close to random}.

\subsection{Putting things together} 

We now discuss how to combine the ideas from the above paragraphs to get the result. Let $\set{(i_t,j_t,k_t)}_{t=1}^{m}$ be the set of adversarially selected triplets and suppose, for the sake of contradiction, that the set is shattered by the hypothesis class $\mathcal{H}(n,d,\alpha)$. On the one hand, we get that since $m$ is large enough and linear in $n$, there exists a labeling of the triplets, such that no embedding on the real line (no $h\in \mathcal{H}(n,1)$) satisfies more than a fraction of $\frac{1}{2}+\varepsilon$ of them. Consider that labeling for the triplets. By our assumption that the set is shattered there exists an embedding (potentially in $n$ dimensions) such that all the triplets (with margin) are satisfied. Applying our rounding algorithm on that embedding we get that there exists an embedding on the real line satisfying a fraction of $\frac{1}{2}+\Omega(\alpha)$ of them (without margin). This leads to a contradiction that:
\begin{align*}
    \frac{1}{2}+\Omega(\alpha)\leq \frac{1}{2}+\varepsilon,
\end{align*}
yet we can choose $\varepsilon=\Theta(\alpha)$ to be sufficiently small compared to $\alpha$. The above are made formal in our proof of Theorem \ref{thrm: main_theorem} in Section \ref{sec:proof}.

\section{Improved VC-dimension: Upper bound of Theorem~\ref{th:intro}}\label{sec:proof}
\label{sec: proof of main theorem}

 In this section, we prove the following theorem.
 \begin{theorem}[Upper bound for $\mathrm{VCdim}(\mathcal{H}(n,d,\alpha))$]
 \label{thrm: main_theorem}
     The VC-dimension of $\mathcal{H}(n,d,\alpha)$ is linear in $n$ and independent of the dimensionality $d$. In particular, there exists an absolute constant $C$ such that:
     \begin{align*}
         \mathrm{VCdim}(\mathcal{H}(n,d,\alpha))\leq  C\max\set{n,\frac{n}{\alpha^2}}.
     \end{align*}
\end{theorem}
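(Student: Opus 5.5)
We follow the strategy sketched in the technical overview: we show that no set of $m=\lceil n/\varepsilon^2\rceil$ triplets is shattered, where $\varepsilon=c_0\min\{\alpha,1\}$ for a small absolute constant $c_0$. Fix an arbitrary set $S=\set{(i_t,j_t,k_t)}_{t=1}^m$ and suppose, for contradiction, that $\mathcal{H}(n,d,\alpha)$ shatters it. We may assume $d\le n$, since $n$ points span at most an $(n-1)$-dimensional affine subspace. The argument has two ingredients: an upper bound on how well any one-dimensional embedding can fit random labels, and a lower bound on how well a projection of a margin-consistent embedding fits the same labels.

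\textbf{Step 1 (noise cannot be fit on the line).} Draw the labels $y(t)\in\{\pm1\}$ uniformly and independently. For a fixed one-dimensional embedding, the number of satisfied (margin-free) triplets is a sum of $m$ independent fair coins. It therefore suffices to bound the number of distinct labelings of $S$ that $\mathcal{H}(n,1)$ induces.

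\begin{itemize}
\item A 1D classifier on $S$ is determined by the signs of $|x_i-x_k|-|x_i-x_j|$. On the line this is a sign pattern of a degree-$\le 2$ polynomial system in $n$ variables.
\item By the Warren/Alon bound used in \cite{alon2024optimal}, the number of such patterns is at most $(Cm/n)^{O(n)}$.
\item A Hoeffding bound plus a union bound then give
\[
\Pr{\exists h\in\mathcal{H}(n,1): h \text{ agrees with } y \text{ on more than } (\tfrac12+\varepsilon)m \text{ triplets}} \le (Cm/n)^{O(n)}e^{-2\varepsilon^2 m}.
\]
\end{itemize}

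With $m=n/\varepsilon^2$, the right-hand side is $\exp\bigl(O(n\log(1/\varepsilon))-2n\bigr)$. This carries a $\log(1/\varepsilon)$ loss, which we must avoid. To remove it, we instead use the Rademacher/chaining bound $\mathbb{E}\sup_h \frac1m\sum_t y_t h(t)=O(\sqrt{\VC(\mathcal{H}(n,1))/m})=O(\sqrt{n/m})$, using $\VC(\mathcal{H}(n,1))=O(n)$ from \cite{alon2024optimal} together with Dudley's entropy integral, which has no log factor. This bound is $O(\varepsilon)$ with a constant we control by taking $m=Kn/\varepsilon^2$ for a large constant $K$. So some fixed labeling $y$ has the property that every $h\in\mathcal{H}(n,1)$ satisfies at most $(\frac12+\varepsilon)m$ of the triplets.

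\textbf{Step 2 (rounding a margin embedding to 1D).} By shattering, there is an embedding $\phi$ satisfying every triplet of $S$ with margin $\alpha$ under the labeling $y$. We project onto a random Gaussian direction $g$, setting $\hat\phi(i)=\langle g,\phi(i)\rangle$. For a single constraint with $a=\phi(i)-\phi(j)$, $b=\phi(i)-\phi(k)$ and $\|b\|>(1+\alpha)\|a\|$, we need to show
\[
\Pr{|\langle g,b\rangle|>|\langle g,a\rangle|}\ge \tfrac12+c\min\{\alpha,1\}.
\]
This probability depends only on the 2D Gaussian $(\langle g,a\rangle,\langle g,b\rangle)$. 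Writing $\theta$ for the angle between $a$ and $b$ and $r=\|b\|/\|a\|$, the probability is an explicit angular measure (a Goemans--Williamson type calculation). It equals $\frac12$ at $r=1$, is increasing in $r$, and has derivative bounded below uniformly in $\theta$, which yields the $\Omega(\alpha)$ gain; for $\alpha\ge1$ it saturates at a constant gain. By linearity of expectation, some direction $g$ satisfies at least $(\frac12+c\min\{\alpha,1\})m$ triplets, and we perturb slightly to keep distances distinct.

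\textbf{Step 3 (conclusion).} Choosing $c_0<c$ contradicts Step 1, so $S$ is not shattered. Hence $\VC(\mathcal{H}(n,d,\alpha))<Kn/\varepsilon^2=O(\max\{n,n/\alpha^2\})$.

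The main obstacle is the uniform lower bound in Step 2. When $a$ and $b$ are nearly parallel, the projected ratio $|\langle g,b\rangle|/|\langle g,a\rangle|$ is close to $r$ for most $g$, which is favorable. The worst case is likely $\theta=\pi/2$, where the probability is $\frac{2}{\pi}\arctan r$, giving gain about $\alpha/\pi$. We will compute the probability in closed form as a function of $(r,\theta)$ and minimize over $\theta$. A secondary concern is securing the log-free bound in Step 1, for which chaining should suffice.
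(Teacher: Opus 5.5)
Your proposal is essentially the paper's proof. The paper also combines a Gaussian projection to the line, whose worst case is $\theta=\pi/2$ and gives $\frac{2}{\pi}\arctan(1+\alpha)\ge\frac12+\frac{\alpha}{5}$, with the log-free bound $\mathcal{R}_m\le c'\sqrt{\VC(\mathcal{H}(n,1))/m}$ and $\VC(\mathcal{H}(n,1))=O(n)$; the paper settles the worst case by showing $F'(\theta)$ has the sign of $-\cos\theta$. One correction: the $r$-derivative of the probability is not bounded below uniformly in $\theta$ (for fixed $r>1$ it tends to $0$ as $\theta\to 0$), so the $\Omega(\alpha)$ gain must come from minimizing the probability itself over $\theta$, as in your closing plan and the paper's Claim.
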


%\section{Proof of Theorem \ref{thrm: main_theorem}}
Note that the VC dimension is monotonically decreasing in $\alpha$, thus it suffices to show that for $\alpha\in (0,1]$ it is bounded by $C\frac{n}{\alpha^2}$. From now on we assume that $\alpha \in (0,1]$.
Our proof uses the following lemma which, roughly speaking, states that if a set of contrastive learning constraints with margin $\alpha$, $\set{(i_t, j_t^{+}, k_t^{-})_{\alpha}}_{t=1}^{m}$ are satisfiable (potentially in $n$ dimensions) then there exists an embedding on the real line satisfying a $\frac{1}{2}+\Omega(\alpha)$ fraction of the contrastive constraints $\set{(i_t,j_t^{+}, k_t^{-})}_{t=1}^{m}$. Note here that the constraints that $\hat{\phi}$ satisfies are without margin.
\begin{lemma}[Rounding to 1D]\label{lem:better than random}
    Let $\phi:[n]\to \mathbb{R}^n$ be such that the constraints $\set{(i_t,j_t^{+},k_t^{-})_{\alpha}}_{t=1}^{m}$ are satisfied by $\phi$. Then there exists a $\hat{\phi}:[n]\to \mathbb{R}$ such that at least  $ \left(\frac{1}{2}+\frac{\alpha}{5}\right)\cdot m$ constraints of $\set{(i_t,j_t^{+},k_t^{-})}_{t=1}^{m}$ are satisfied by $\hat{\phi}$. 
\end{lemma}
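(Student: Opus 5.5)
The plan is to round $\phi$ by a random Gaussian projection, in the spirit of \cite{goemans1995improved}, and then argue by linearity of expectation. Draw $g\sim\mathcal{N}(0,I_n)$ and set $\hat{\phi}(i)=\langle g,\phi(i)\rangle$. Fix a constraint $t$ and write $a=\phi(j_t)-\phi(i_t)$ and $b=\phi(k_t)-\phi(i_t)$, so that $\|b\|_2>(1+\alpha)\|a\|_2$. The rounded embedding satisfies the margin-free triplet $(i_t,j_t^{+},k_t^{-})$ exactly when $|\langle g,b\rangle|>|\langle g,a\rangle|$. It therefore suffices to show that for each $t$
\[
\Pr{|\langle g,b\rangle|>|\langle g,a\rangle|}\ \ge\ \tfrac12+\tfrac{\alpha}{5}.
\]
Summing over $t$, the expected number of satisfied constraints is then at least $(\tfrac12+\tfrac{\alpha}{5})m$, so some realization of $g$ achieves this. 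Ties $|\langle g,a\rangle|=|\langle g,b\rangle|$ occur with probability zero because $b\neq \pm a$, and the degenerate case $a=0$ only helps.

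To compute the probability, rewrite the event as $(Y-X)(Y+X)>0$, where $X=\langle g,a\rangle$ and $Y=\langle g,b\rangle$. The variables $U=\langle g,b-a\rangle$ and $V=\langle g,b+a\rangle$ are centered and jointly Gaussian. By Sheppard's formula (the same identity behind Goemans--Williamson), they have the same sign with probability $\tfrac12+\tfrac{1}{\pi}\arcsin\rho$, where
\[
\rho=\frac{\langle b-a,b+a\rangle}{\|b-a\|_2\|b+a\|_2}=\frac{\|b\|_2^2-\|a\|_2^2}{\|b-a\|_2\|b+a\|_2}.
\]
Next I will lower bound $\rho$ using AM--GM and the parallelogram law:
\[
\|b-a\|_2\|b+a\|_2\le \tfrac12\left(\|b-a\|_2^2+\|b+a\|_2^2\right)=\|a\|_2^2+\|b\|_2^2.
\]
Let $r=\|b\|_2/\|a\|_2>1+\alpha$. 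Then $\rho\ge (r^2-1)/(r^2+1)$, which is increasing in $r$, so
\[
\rho\ \ge\ \frac{(1+\alpha)^2-1}{(1+\alpha)^2+1}=\frac{2\alpha+\alpha^2}{2+2\alpha+\alpha^2}=:\rho(\alpha).
\]

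The only real obstacle is the constant: I need $\tfrac1\pi\arcsin\rho(\alpha)\ge \alpha/5$ for all $\alpha\in(0,1]$. The crude bound $\arcsin\rho\ge\rho$ together with $\rho(\alpha)\ge 3\alpha/5$ yields only about $0.19\alpha$, so I will use the convexity of $\arcsin$ on $[0,1]$ more carefully.

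At the endpoints the bound has room to spare. As $\alpha\to0$ we have $\rho(\alpha)\sim\alpha$, giving $\alpha/\pi\approx0.318\alpha$. At $\alpha=1$ we have $\rho=3/5$ and $\arcsin(3/5)/\pi\approx0.2048>1/5$.

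For the middle range, I would show that $f(\alpha)=\arcsin(\rho(\alpha))/(\pi\alpha)$ is at least $1/5$ on $(0,1]$. This can be done either by a direct derivative or monotonicity check, or by splitting $(0,1]$ into a few intervals and using, on each, that $\arcsin$ is increasing and $\arcsin(x)\ge x+x^3/6$. If this check turned out tight or failed somewhere, the fallback is to note that only $\tfrac12+\Omega(\alpha)$ is needed for Theorem~\ref{thrm: main_theorem}, so a slightly smaller constant would still suffice with a smaller choice of $\varepsilon$.
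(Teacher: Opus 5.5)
Your proof is correct, but it bounds the per-triplet probability by a different route than the paper. Both proofs use the same Gaussian projection $\hat\phi(i)=\langle g,\phi(i)\rangle$ and linearity of expectation. The difference is in showing that $|\langle g,b\rangle|>|\langle g,a\rangle|$ holds with probability at least $\frac12+\frac{\alpha}{5}$.

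The paper projects $g$ onto the plane spanned by the two difference vectors and writes the event as two arcs of the uniformly distributed angle $\varphi$. It then minimizes the arc length $F(\theta)$ over the angle $\theta$ between the vectors by a derivative computation, obtaining $\frac{2}{\pi}\arctan(1+\alpha)$ with the minimum at $\theta=\pi/2$.

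You instead factor $Y^2-X^2=(Y-X)(Y+X)$, so the event becomes sign agreement of two jointly Gaussian variables and Sheppard's formula gives the probability exactly. The worst case over angles is then one line of AM--GM plus the parallelogram law, with equality iff $\langle a,b\rangle=0$, the same extremal configuration as the paper's. The two bounds in fact coincide: with $1+\alpha=\tan\psi$ one has $\rho(\alpha)=-\cos 2\psi$, so $\frac12+\frac1\pi\arcsin\rho(\alpha)=\frac{2}{\pi}\arctan(1+\alpha)$.

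Your route is shorter and gives the probability in closed form via a correlation coefficient. The paper's is fully elementary and lands directly on the $\arctan$ form, from which the linear bound follows by concavity.

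The numerical step you left open closes cleanly, so your fallback is unnecessary; a smaller constant would not give the lemma as stated anyway. With $\beta=1+\alpha$ and $h(\alpha)=\arcsin\rho(\alpha)$, one computes $\sqrt{1-\rho(\alpha)^2}=\frac{2\beta}{\beta^2+1}$ and $h'(\alpha)=\frac{2}{1+\beta^2}$, which is decreasing in $\alpha$. Hence $h$ is concave with $h(0)=0$, so $h(\alpha)/\alpha$ is nonincreasing on $(0,1]$. Its minimum is at $\alpha=1$, where it equals $\arcsin(3/5)\approx 0.6435$. Thus $\frac1\pi\arcsin\rho(\alpha)\ge 0.2048\,\alpha\ge\frac{\alpha}{5}$, which is the paper's secant-line argument for $\arctan$ in different coordinates.
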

\begin{proof}
    Let $g\sim \mathcal{N}(0, I_n)$ be a standard Gaussian vector. We will consider the random embedding $\hat{\phi}_g:[n]\to \mathbb{R}$ given by $\hat{\phi}(i)=\Iprod{\phi(i), g}$ and show that on expectation, the embedding satisfies a fraction of $\frac{1}{2}+\frac{\alpha}{5}$ of constraints in $\set{(i_t,j_t^{+}, k_t^{-})}_{t=1}^{m}$.  Note that by virtue of the fact that $\|\phi(i)-\phi(k)\|\ne \|\phi(i)-\phi(j)\|$ for every $i,j,k$ with $j\ne k$, we get that, the with probability one, $|\hat{\phi}_g(i)-\hat{\phi}_g(j)|\neq |\hat{\phi}_g(i)-\hat{\phi}_g(k)|$. The above combined imply that there exists a $v$ such that the embedding $\hat{\phi}$ given by $\hat{\phi}(i)=\Iprod{\phi(i), v}$ satisfies a fraction of at least $\frac{1}{2}+\frac{\alpha}{5}$ of constraints and furthermore $h_{\hat{\phi}}\in \mathcal{H}(n,1)$. 

    For $i\in [n]$ let $\bar{i}=\phi(i)$. Observe that the embedding $\hat{\phi}$ satisfies a constraint $(i, j^+, k^-)$ if: \[\left|\Iprod{\bar{i}-\bar{k}, g}\right|>\left|\Iprod{\bar{i}-\bar{j}, g}\right|.\]
    By linearity of expectation, the expected number of constraints that $\hat{\phi}_g$ satisfies is \[\sum_{t=1}^{m}\Pr{\left|\Iprod{\bar{i}_t-\bar{k}_t, g}\right|>\left|\Iprod{\bar{i}_t-\bar{j}_t, g}\right|},\] where the probability is over the randomness in $g$. It therefore suffices to show that for every $t\in [m]$:
    \begin{align*}
        \Pr{\left|\Iprod{\bar{i}_t-\bar{k}_t, g}\right|>\left|\Iprod{\bar{i}_t-\bar{j}_t, g}\right|}\geq \frac{1}{2}+\frac{\alpha}{5}, 
    \end{align*}
    where for vectors $u=\bar{i}_t-\bar{k}_t$ and $v=\bar{i}_t-\bar{j}_t$ it holds that $\|u\|>(1+\alpha)\|v\|.$ To that end, we prove the following claim.
    \begin{claim}\label{clm: 1D projection}
        Let $u,v \in \mathbb{R}^d$ and let $g$ be a standard Gaussian vector.
If $\|u\| > (1+\alpha)\|v\|$ for some $\alpha > 0$, then
\[
\Pr{
|\langle u,g\rangle| > |\langle v,g\rangle| }
\;\ge\;
\frac{2}{\pi}\arctan(1+\alpha).
\]
In particular, for $\alpha\in[0,1]$:
\begin{align*}
    \Pr{
|\langle u,g\rangle| > |\langle v,g\rangle| }
\;\ge\;\frac{1}{2}+\frac{\alpha}{5}.
\end{align*}
\end{claim}
Claim \ref{clm: 1D projection}, which we prove in Appendix \ref{App: Reduction to 1D}, gives us the result.\footnote{Observe this claim is for two vectors $u,v$ as long as they have a gap $\alpha$, so it also allows us to capture the case of quadruplet comparisons $(i,j,k,l)$~\citep{bilu2005monotone,vankadara2023insights}, where we compare the distance between $i,j$ to the distance between $k,l$.}
\end{proof}
In the next lemma we use the upper bound on the VC dimension of contrastive learning on the real line, proven in \cite{alon2024optimal}, to prove that for every set of $\approx \frac{n}{\varepsilon^2}$ unlabeled triplets $(i_t,j_t,k_t)$, there exists a labeling of them such that no embedding is consistent with more than a fraction of $\frac{1}{2}+\varepsilon$ of them. 
\begin{lemma}\label{lem: Close to random}
There exists an absolute constant $c$ such that for every $\varepsilon>0$, the following holds. For every $m\geq \frac{c}{\varepsilon^2}n$ and set $\set{(i_t,j_t,k_t)}_{t=1}^{m}$ there exists a labeling $y:[m]\to \set{-1,1}$ such that no embedding $\phi:[n]\to \mathbb{R}$ satisfies more than a fraction of $\frac{1}{2}+\varepsilon$ of the geometric triplet constraints induced by the labeling. In particular, for every $h\in \mathcal{H}(n,1)$ it holds that:
\begin{align*}
    \frac{1}{m}\sum_{t=1}^{m}\mathbf{1}\set{h(i_t,j_t,k_t)=y(t)}\le \frac{1}{2}+\varepsilon
\end{align*}
\end{lemma}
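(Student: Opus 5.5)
We will prove Lemma~\ref{lem: Close to random} by the probabilistic method: draw the labeling $y$ uniformly at random from $\set{-1,1}^m$ and show that, with positive probability, every $h\in\mathcal{H}(n,1)$ agrees with $y$ on at most $(\frac12+\varepsilon)m$ of the triplets. Restricted to the sample $S=\set{(i_t,j_t,k_t)}_{t=1}^m$, each $h$ induces a sign vector $(h(i_t,j_t,k_t))_{t}\in\set{-1,1}^m$. By the definition of $\mathcal{H}(n,1)$, no ties occur, so $h$ is total. Let $\mathcal{A}\subseteq\set{-1,1}^m$ be the set of patterns realized on $S$. The agreement fraction equals $\frac12+\frac{1}{2m}\sum_t y(t)a_t$ for $a\in\mathcal{A}$, so it suffices to show
\[
\Pr{\max_{a\in\mathcal{A}}\sum_{t=1}^m y(t)a_t> 2\varepsilon m}<1.
\]

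\textbf{Step 1: bound $|\mathcal{A}|$.} We bound the number of patterns by a growth-function argument. The result of \cite{alon2024optimal} gives $\VC(\mathcal{H}(n,1))\le c_0 n$. Sauer--Shelah then yields $|\mathcal{A}|\le (em/(c_0n))^{c_0 n}$, so $\log|\mathcal{A}|\le c_0 n\log(em/(c_0 n))$. Alternatively, we can count directly. On the line, a triplet pattern is determined by the sign pattern of the $O(n^2)$ quantities $\pm(\phi(a)-\phi(b))$ and of the sums and differences of such pairs, which gives Warren's bound $(O(m))^{n}$. Either route gives $\log|\mathcal{A}|=O(n\log(m/n))$.

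\textbf{Step 2: union bound vs.\ chaining.} For a fixed $a$, Hoeffding gives $\Pr{\sum_t y(t)a_t>2\varepsilon m}\le e^{-2\varepsilon^2 m}$. A union bound then requires $2\varepsilon^2 m> c_0 n\log(em/(c_0n))$, that is, $m\gtrsim \frac{n}{\varepsilon^2}\log\frac1\varepsilon$. This loses a $\log(1/\varepsilon)$ factor, which we expect to be the main obstacle: the lemma demands $m\ge cn/\varepsilon^2$ with no logarithm.

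To remove the log, we use the sharper Rademacher bound for VC classes that comes from Dudley chaining with Haussler's packing bound, as in \cite{bousquet2003introduction}:
\[
\Esymb\Brac{\max_{a\in\mathcal{A}}\frac1m\sum_t y(t)a_t}\le C_1\sqrt{\frac{\VC(\mathcal{H}(n,1))}{m}}\le C_1\sqrt{\frac{c_0 n}{m}}.
\]
Choosing $c=(C_1^2c_0)/\varepsilon'^2$ scaled appropriately, i.e.\ taking $m\ge cn/\varepsilon^2$ with $c\ge C_1^2c_0$, makes the expectation at most $\varepsilon$. Since the maximum is at most its expectation with positive probability, some labeling $y$ has correlation at most $\varepsilon$ with every $a\in\mathcal{A}$. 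The agreement fraction is then at most $\frac12+\frac{\varepsilon}{2}\le\frac12+\varepsilon$.

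\textbf{Step 3: conclude.} Fix that $y$. Any $\phi:[n]\to\mathbb{R}$ satisfying the constraints induced by $y$ corresponds to some $h_\phi\in\mathcal{H}(n,1)$ with $h_\phi(i_t,j_t,k_t)=y(t)$ exactly on the satisfied triplets. Hence at most a $\frac12+\varepsilon$ fraction are satisfied. If a generic $\phi$ has ties, we perturb it slightly, which only preserves the strict inequalities, so we may assume it lies in the no-tie class.

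The step we expect to be delicate is citing the chaining bound with the correct (log-free) dependence. If only the Massart-lemma version is available, we would fall back on the log-loss route and absorb the loss by a more careful direct chaining over the pattern set.
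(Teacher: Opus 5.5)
Your proposal is correct and follows essentially the same route as the paper. Both draw a uniformly random labeling, rewrite the agreement fraction as $\frac12$ plus half the empirical Rademacher correlation, invoke the log-free chaining bound $c'\sqrt{\VC(\mathcal{H})/m}$ from \cite{bousquet2003introduction} together with $\VC(\mathcal{H}(n,1))=O(n)$ from \cite{alon2024optimal}, and choose $c$ large. Your Step~1 pattern count and the union-bound discussion are unnecessary detours, though your perturbation remark in Step~3 usefully covers embeddings with ties, which the paper's proof leaves implicit.
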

\begin{proof}
    We will use the probabilistic method. In particular, we will pick the label $y$ randomly and show that \begin{align}\label{eq: almost_Rademacher}\Esymb_{y}\left[\sup_{h\in \mathcal{H}(n,1)} \frac{1}{m}\sum_{t=1}^{m}\mathbf{1}\set{h(i_t,j_t,k_t)=y(t)}\right]\leq \frac{1}{2}+\varepsilon,\end{align} using bounds for the worst case empirical Rademacher complexity of a hypothesis class in terms of the VC-dimension, and then the result directly follows. In particular, we let $y(t)=1$ with probability $1/2$ and $y(t)=-1$ with the remaining probability, independently of everything else. We can equivalently write Equation \ref{eq: almost_Rademacher} as:
    \begin{align}\label{eq: Rademacher on the line}
         \Esymb_{y}\left[\sup_{h\in \mathcal{H}(n,1)}\frac{1}{m}\sum_{t=1}^{m}h(i_t,j_t,k_t)\cdot y(t)\right]\le 2{\varepsilon},
    \end{align}
    and observe that the left hand side exactly corresponds to the empirical Rademacher complexity of hypothesis class $\mathcal{H}(n,1)$ on points $\set{(i_t,j_t,k_t)}_{t=1}^{m}$. We now use (see e.g. \cite{bousquet2003introduction}) that there exists an absolute constant $c'$ such that the empirical Rademacher complexity of a hypothesis class $\mathcal{H}$ on $m$ points can be upper bounded in terms of the VC dimension of the class. In particular, if $\mathcal{R}_m$ is the worst-case\footnote{Supremum over sets of size $m$.} empirical Rademacher complexity on $m$ points then  \begin{align}\label{eq: worst-case Rademacher complexity bound} \mathcal{R}_m\leq c'\sqrt{\frac{\textrm{VCdim}(\mathcal{H})}{m}}.
    \end{align}
    We apply this bound to the left hand side of  Equation \ref{eq: Rademacher on the line} and use that $\VC(\mathcal{H}(n,1))=O(n)$ to get that for $c$ being a sufficiently large constant:
    \begin{align*}
         \Esymb_{y}\left[\sup_{h\in \mathcal{H}(n,1)}\frac{1}{m}\sum_{t=1}^{m}h(i_t,j_t,k_t)\cdot y(t)\right]&\le c'\sqrt{\frac{c_0n\varepsilon^2}{cn}} \leq 2{\varepsilon},
    \end{align*}
    which gives us the result.
\end{proof}
We can now prove our main Theorem \ref{thrm: main_theorem}.
\begin{proof}[Proof of Theorem \ref{thrm: main_theorem}]
Let $\varepsilon<\frac{\alpha}{5}$ and let $c$ be a large enough constant. We will show that no set of size $m\geq c\frac{n}{\varepsilon^2}=\Omega(\frac{n}{\alpha^2})$ can be shattered, which gives the result. 
Consider a set of triplets $\set{(i_t,j_t,k_t)}_{t=1}^{m}$ and assume, for the sake of contradiction, that the set is shattered by $\mathcal{H}(n,d,\alpha)$, meaning that for every labeling of the triplets, there is an embedding $\phi:[n]\to \mathbb{R}^{n}$ that is consistent with all the corresponding triplet constraints.
%We will use Lemma \ref{lem:better than random} and Lemma \ref{lem: Close to random} to get a contradiction: On the one hand,  we use Lemma \ref{lem:better than random} to argue that for every labeling there is an embedding to the real line that is strictly better than random, in that it satisfies a fraction of $\frac{1}{2}+\Omega(a)$ constraints.
%On the other hand, by Lemma \ref{lem: Close to random} we have that there is a labeling of the constraints such that every embedding satisfies very close to half of the constraints, i.e., at most a $\frac{1}{2}+\varepsilon$ fraction of them. Putting the two together and using that $\varepsilon<\frac{\alpha}{5}$, we get a contradiction. 

Formally, by Lemma \ref{lem: Close to random}, taking $c$ bigger than the constant in the lemma we get that there exists a labeling $y^*$ such that for every $h\in \mathcal{H}(n,1)$:
\begin{align}\label{eq: final thrm no better than random}
    \frac{1}{m}\sum_{t=1}^{m}\mathbf{1}\set{h(i_t,j_t,k_t)=y^*_t}\leq \frac{1}{2}+\varepsilon.
\end{align}
On the other hand, by our assumption that the set $\set{(i_t,j_t,k_t)}_{t=1}^{m}$ is shattered, we get that there exists an embedding $\phi$ satisfying all the corresponding constraints: if $y^*_t=1$ then  $\|\phi(i_t)-\phi(k_t)\|>(1+\alpha)\|\phi(i_t)-\phi(j_t)\|$, and if $y^*_t=-1$ then $\|\phi(i_t)-\phi(j_t)\|>(1+\alpha)\|\phi(i_t)-\phi(k_t)\|$. 
Without loss of generality and for notational convenience, we can assume that the labeled triplets are $\set{(i_t,j_t^{+},k_t^{-})_{\alpha}}_{t=1}^{m}$; by Lemma \ref{lem:better than random}, there exists a one dimensional embedding $\hat{\phi}:[n]\to \mathbb{R}$ such that a fraction of at least $\frac{1}{2}+\frac{\alpha}{5}$ constraints in $\set{(i_t,j_t^{+},k_t^{-})}$ are satisfied.\footnote{These are constraints without margin.} For the function $h_{\hat{\phi}}\in \mathcal{H}(n,1)$ this implies that:
\begin{align}\label{eq: final theorem strictly better than random}
    \frac{1}{m}\sum_{t=1}^{m}\mathbf{1}\set{h_{\hat{\phi}}(i_t,j_t,k_t)=y^*_t}\geq \frac{1}{2}+\frac{\alpha}{5}.
\end{align}
Combining Equations \ref{eq: final thrm no better than random} and \ref{eq: final theorem strictly better than random} we get that:
\begin{align*}
&\frac{1}{2}+\frac{\alpha}{5}\leq \frac{1}{2}+\varepsilon,
\end{align*}
which is a contradiction since we have taken $\epsilon<\frac{\alpha}{5}.$
\end{proof}
\section{A Matching Lower Bound}
\label{sec: app lower bound proof}
In this section we prove a matching lower bound for the VC dimension of contrastive learning with margin parameter $\alpha$.
In particular we show that  as long as $\alpha\geq \Omega(\max (\frac{1}{\sqrt{n}}, \frac{1}{\sqrt{d}})),$ then $\mathrm{VCdim}(\mathcal{H}(n,d,\alpha))\geq \Omega(\max\{\frac{n}{\alpha^2}, n\})$.
\begin{theorem}
    Suppose that $\alpha\geq \max (n^{-1/2}, d^{-1/2})$. Then, there exists an absolute constant $c$ such that:
    \begin{align*}
        \mathrm{VCdim}(\mathcal{H}(n,d,\alpha))\geq c\frac{n}{\alpha^2}.
    \end{align*}
\end{theorem}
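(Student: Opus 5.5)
The plan is a direct shattering construction, in the spirit of the standard lower bound for linear classifiers with margin. I use a small set of fixed ``reference'' points shared by all anchors, and the remaining points as independent anchors. Each anchor handles $L=\Theta(1/\alpha^2)$ triplets, giving $\Theta(n/\alpha^2)$ triplets in total. The crucial feature is that each triplet involves only one anchor, so labels for different anchors can be realized independently.

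\emph{Construction for $\alpha\le 1$.} Set $L=\lfloor c_1/\alpha^2\rfloor$ for a small absolute constant $c_1$, and place reference points at $b_s=e_s$ and $b'_s=-e_s$ for $s=1,\dots,L$.
\begin{itemize}
\item This requires $d\ge L$, which holds because $\alpha\ge d^{-1/2}$ (after shrinking $c_1$).
\item It uses $2L\le n/2$ points, which holds because $\alpha\ge n^{-1/2}$.
\item The remaining $n-2L\ge n/2$ points are anchors $a_1,a_2,\dots$.
\end{itemize}
The candidate shattered set consists of all triplets $(a_r,b_s,b'_s)$, of size $\ge \frac n2 L=\Omega(n/\alpha^2)$. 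Given labels $y_{r,s}\in\{\pm1\}$, place anchor $a_r$ at
\[
x_r=\gamma\sum_{s=1}^L y_{r,s}e_s, \qquad \gamma=L^{-1/2},
\]
so that $\|x_r\|=1$. Then
\[
\|x_r\mp e_s\|^2 = 2\mp 2\gamma y_{r,s}.
\]
Hence the point indicated by the label is closer, and the ratio of the squared distances is
\[
\frac{1+\gamma}{1-\gamma}\ \ge\ 1+2\gamma .
\]
We need this to be at least $(1+\alpha)^2\le 1+3\alpha$ for $\alpha\le1$. It suffices that $\gamma\ge \tfrac32\alpha$, i.e.\ $L\le 4/(9\alpha^2)$, which the choice of $c_1$ guarantees. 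If $c_1/\alpha^2<1$, i.e.\ for $\alpha$ near $1$, the case below applies instead.

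\emph{Case of large $\alpha$ (or $L<1$).} Here the claimed bound $c\,n/\alpha^2$ is at most $c\,n$, so one triplet per anchor suffices. Take $L=1$ and place the anchor at $x_r=t\,y_r e_1$ with $t<1$ close to $1$. The distance ratio $(1+t)/(1-t)$ then exceeds $1+\alpha$ for any fixed $\alpha$. This shatters $\ge n/2\ge c\,n/\alpha^2$ triplets (for $\alpha\ge1$, say).

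\emph{Remaining technicality.} Definition~\ref{def: partial} requires all distances from a common point to be distinct, and the symmetric construction violates this (e.g.\ $\|b_s-b_{s'}\|$ are all equal). I will apply a tiny generic perturbation to all points. Because every constraint above holds with strict inequality, a sufficiently small perturbation preserves all the labels while making the distances generic.

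I expect the main obstacle to be the constant bookkeeping in comparing $(1+\gamma)/(1-\gamma)$ with $(1+\alpha)^2$ uniformly over $\alpha\in[\max(n^{-1/2},d^{-1/2}),1]$, together with the boundary regime where $L$ would round to $0$. Conceptually, it is simply the observation that $1/\alpha^2$ orthonormal directions can be sign-patterned by a unit vector with per-coordinate correlation $\approx\alpha$.
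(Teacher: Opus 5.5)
Your proposal is correct and is essentially the paper's proof. Both use shared reference points at $\Theta(1/\alpha^2)$ standard basis vectors, place each anchor independently at $\gamma$ times a sign vector (the margin-halfspace construction), use a separate $\Theta(n)$ construction when $\alpha$ is bounded below by a constant, and finish with a small perturbation to make distances distinct. The one difference is the comparator in each triplet: you compare an anchor's distances to $e_s$ and $-e_s$, while the paper compares its distance to $e_l$ against an auxiliary per-anchor point $j_t$ placed at distance $\sqrt{2}$. Your version saves that extra point and gives a slightly larger ratio, but the mechanism is the same.
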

\begin{proof}
        We divide into two cases. We first consider the case where $\alpha\geq\frac{1}{4}$, in which case we show that the VC dimension is $\Omega(n)$. We make a distinction between the elements in $[n]$. We use $i,j$ to denote the first two elements and $\set{k_t}_{t=1}^{n-2}$ for the rest.
    We consider the set of $\Theta(n)$ triplets given by $\set{(i,j,k_t)}_{t=1}^{n-2}$ and show that they are shattered. For a labeling of the triplets, let $A\subseteq [n-2]$ be the $t\in [n]$ such that $(i,j,k_t)$ is labeled as $(i,j^+, k_t^-)$ and $B=[n-2]\setminus A$.
    We define the embedding to be $\phi(i)=0$, $\phi(j)=1$, furthermore for $t\in A$ we let $\phi(k_t)=\kappa$ where $\kappa>1+\alpha$ and for $t\in B$, $\phi(k_t)=\tau$ where $\tau<\frac{1}{1+\alpha}$.
    Consider $t\in A$ and the triplet $(i,j^+,k_t^-)_{\alpha}$, we have that $|\phi(i)-\phi(k_t)|=\kappa>(1+\alpha)|\phi(i)-\phi(j)|$, and the triplet $(i,j^+,k_t^-)_{\alpha}$ is satisfied. On the other hand for $t\in B$, we have that $|\phi(i)-\phi(j)|=\frac{1}{\tau}|\phi(i)-\phi(k_t)|>(1+\alpha)|\phi(i)-\phi(k_t)|$ and the triplet $(i,k_t^+,j^-)_{\alpha}$ is satisfied.
    In the above construction there exist $i,j,k$ with $j\ne k$ such that $\|\phi(i)-\phi(j)\|= \|\phi(i)-\phi(k)\|$. A small perturbation of all points resolves this issue while maintaining the conclusion.
    
    In the remaining case, we consider $\alpha<\frac{1}{4}$. We proceed to describe an instance of $\approx \frac{n}{\alpha^2}$ triplets that is shattered. Let for convenience $\gamma =4\alpha$. We first make a distinction between the elements in $[n]$. We will use $k_1,\ldots,k_{\frac{1}{\gamma^2}}$ to denote the first $\frac{1}{\gamma^2}$ elements of $[n]$ and $i_t,j_t$ with $t\in \left[\frac{n-\frac{1}{\gamma^2}}{2}\right]$ for the remaining elements. The set of triplets that we propose is inspired by the construction of \cite{alon2022theory} for the lower bound on the VC dimension of linear classifiers with margin.
    
    We consider the set of triplets $\set{(i_t,j_t,k_l): t\in \left[\frac{n-\frac{1}{\gamma^2}}{2}\right], l\in [\frac{1}{\gamma^2}]}$ and show that the set is shattered.
    Note that the number of triplets is $\frac{1}{2\gamma^2}(n-\frac{1}{\gamma^2})=\Theta(\frac{n}{\alpha^2})$ (here, we have used that $\frac{1}{\alpha^2}\leq n$), thus showing that they can be shattered gives us the result.
    For every labeling of the triplets, the embedding for points $k_l$ will be the same.
    In particular, we will always map $k_l$ to $e_l,$ the $l$-th standard basis vector (here we have used that $\frac{1}{\alpha^2}\leq d$).  Now consider a labeling of the triplets and observe that we can decompose the triplets as $\cup_{t}\set{(i_t,j_t,k_l)}$, where having fixed the mappings of $k_l$, the instances are decoupled for different $t$.
    It therefore suffices to show that for every labeling of triplets $\set{(i,j,k_l)}$ there is an embedding of points $i$ and $j$ that satisfies all the triplets. 
    
        Consider a labeling of the triplets $\set{(i,j,k_l)}_{l=1}^{\frac{1}{\gamma^2}}$ and let $A$ be the subset of $l$ in $[\frac{1}{\gamma^2}]$ such that the triplet $(i,j,k_l)$ is labeled as $(i,k_l^+,j^-)_{\alpha}$ and $B=[\frac{1}{\gamma^2}]\setminus A$.
    Let $w=\sum_{l\in A}e_l-\sum_{l\in B}e_l$, $w$ is exactly the vector that defines the hyperplane separating the points $\set{e_l, l\in A}$ from the points $\set{e_l, l\in B}$ with margin $\gamma$, see Proposition 17 in \cite{alon2022theory}.
    We let $\phi(i) =\gamma\cdot w$. We also select $j$ to be a point at distance $\sqrt2$ from $\phi(i)$. We now consider $l\in A$, we have that 
    \begin{align*}
        \|\phi(i)-\phi(k_l)\|&=\|\phi(i)-e_l\|\\&
        =\left((1-\gamma)^2+\gamma^2\left(\frac{1}{\gamma^2}-1\right)\right)^{\frac{1}{2}}\\
    &=\sqrt{2-2\gamma}.
    \end{align*}
    This in turn means that:
    \begin{align*}
        \|\phi(i)-\phi(j)\|&=\sqrt{2}\\
        &= \frac{1}{\sqrt{1-\gamma}}\cdot \|\phi(i)-e_l\|\\
        &\geq(1+\frac{1}{2}\gamma)\cdot \|\phi(i)-e_l\|\\
        &=(1+2\alpha)\cdot \|\phi(i)-\phi(k_l)\|
    \end{align*}
    where we have used that the function $f(x)=\frac{1}{\sqrt{1-x}}$ is convex and $f'(0)=\frac{1}{2}$. We therefore get that $(i,k_l^+, j^-)_{\alpha}$ is satisfied by the embedding.
    
    On the other hand, for $l\in B$, we have that 
    \begin{align*}
        \|\phi(i)-\phi(k_l)\|&=\sqrt{1+\gamma}\cdot \sqrt{2}\\
        &=\sqrt{1+\gamma}\cdot \|\phi(i)-\phi(j)\|\\
        &\geq (1+\frac{\gamma}{4})\cdot \|\phi(i)-\phi(j)\|\\
        &=(1+\alpha)\cdot \|\phi(i)-\phi(j)\|,
    \end{align*}
    meaning that the triplet $(i,j^+,k_l^-)_{\alpha}$ is satisfied. In the above construction there exist $i,j,k$ with $j\ne k$ such that $\|\phi(i)-\phi(j)\|= \|\phi(i)-\phi(k)\|$. A small perturbation of all points resolves this issue while maintaining the conclusion.

\end{proof}
\section{Experiments}\label{sec:experiments}
%say broader comment on practice of ML that meaningful embeddings are separated by at least some amount, so we expalin, also mention we tested experimentally

To complement our theoretical bounds, we empirically explore how the generalization error $\hat \epsilon$ is affected as we vary the dimension $d$, the number of points $n$, the margin $\alpha$ and the number $m$ of given triplet samples. We test both on synthetic data generated from a Gaussian distribution and on real datasets (CIFAR-100). Our goal is to test the dimension-independence of our $O(n/\alpha^2)$ bound.

\paragraph{Experimental Setting.} For each configuration $(n,d,\alpha)$, we sample $m$ training triplets from $[n]^3$ uniformly at random, reject any that fail the $(1+\alpha)$-margin condition under a ground-truth metric $\phi^*$, and fit an embedding $\hat\phi:[n]\to\R^{d}$ by minimizing the multiplicative-margin triplet loss
\[
\mathcal{L}(\hat\phi) \;=\; \frac{1}{m}\sum_{t=1}^{m}\Brac{(1+\alpha)^2\bigsnormt{\hat\phi(i_t)-\hat\phi(j_t)} - \bigsnormt{\hat\phi(i_t)-\hat\phi(k_t)}}_+,
\]
which is the natural surrogate for the $(1+\alpha)$-separation defining $\mathcal{H}(n,d,\alpha)$ (see~\cite{schroff2015facenet} and survey~\cite{vankadara2023insights}). We optimize with Adam, with learning rate $0.01$ and $1000$ steps, and report the empirical generalization error $\hat\eps$ on $5{,}000$ held-out $(1+\alpha)$-separated test triplets. 

We perform experiments for two types of ground-truth embeddings. In the \emph{synthetic} setting, $\phi^*$ is the undelying embedding of $n$ points drawn i.i.d. from $\calN(0,I_{50})$ and the learner directly parameterizes the $n$ embedding rows in $\R^{d}$. In the \emph{real-data} setting, $\phi^*$ is the L2-normalized penultimate feature (this provides a vector in $\mathbb{R}^{512}$) of an ImageNet-pretrained ResNet18~\citep{he2016deep} on $5{,}000$ CIFAR-100 test images~\citep{krizhevsky2009learning}, frozen during training, and the learner is a linear head $\R^{512}\to\R^{d}$ on top of these features.

\begin{figure}[ht]
\centering
\begin{minipage}{0.88\textwidth}
\centering
\begin{subfigure}{0.47\linewidth}
  \centering
  \includegraphics[width=\linewidth]{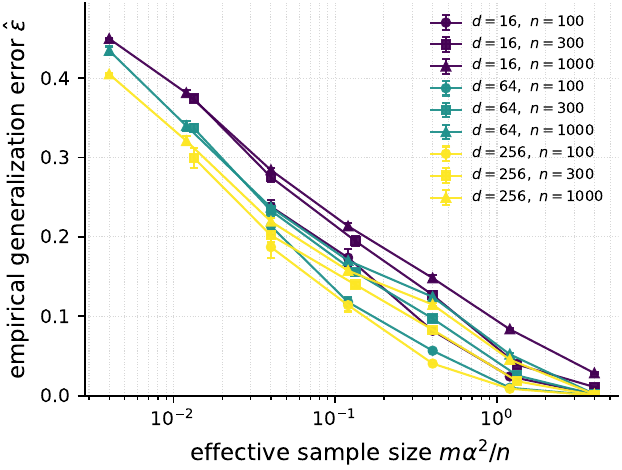}
  \caption{Synthetic, $\alpha=0.2$.}
  \label{fig:synth-dim}
\end{subfigure}\hfill
\begin{subfigure}{0.49\linewidth}
  \centering
  \includegraphics[width=\linewidth]{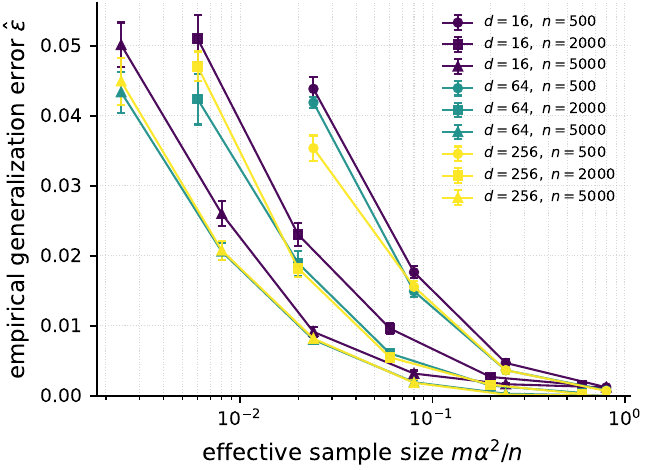}
  \caption{CIFAR-100 / ResNet18, $\alpha=0.2$.}
  \label{fig:real-dim}
\end{subfigure}
\end{minipage}

\caption{Dimension-independence: empirical generalization error
$\hat\eps$ against $m\alpha^2/n$ at $\alpha=0.2$.
\textbf{(a)}~Synthetic setup with underlying embedding
$\phi^\ast\in\R^{50}$.
\textbf{(b)}~CIFAR-100 / ResNet18 setup with a linear head over
512-dimensional features.}
\label{fig:dim-pair}
\end{figure}

\paragraph{Dimension-independence.} As Theorem~\ref{thrm: main_theorem} does not depend on the embedding dimension $d$, for fixed $m\alpha^2/n$, the empirical error should not change as we increase $d$. To test this we fix $\alpha=0.2$ and vary $d\in\{16,64,256\}$. On the synthetic side we keep the ground-truth ambient dimension at $50$ so that the data distribution itself is unchanged, and only the learner's capacity changes. We sweep $n\in\{100,300,1000\}$ and $m\in\{100,300,10^3,3\!\cdot\!10^3,10^4,3\!\cdot\!10^4,10^5\}$ on the synthetic setup, and $n\in\{500,2000,5000\}$ and $m\in\{300,10^3,3\!\cdot\!10^3,10^4,3\!\cdot\!10^4,10^5\}$ on the real-data setup, over 5 random seeds. Figure~\ref{fig:dim-pair} agrees with the theoretical predictions on both setups, as it shows the three curves corresponding to different dimensions $d=16,64,256$ (encoded by different colors) lie on top of each other within each $n$-trajectory. Interestingly, the plots show that on real-data (Fig.~1 (b)), dimension-independence is quite prominent, since as $d$ varies, the error $\hat \epsilon$ changes only slightly (notice the small values of $\hat \epsilon$ on the $y$-axis). Appendix~\ref{app:additional-experiments} provides further experiments.%: we provide $\alpha$-sweeps, varying the size of the margin, where we verify the $m\alpha^2/n$ scaling, additional $\alpha$-slices that show dimension-vs-generalization error, and a margin-satisfaction sanity check for learned embeddings.
\section*{Conclusion}
We resolved an open question from~\cite{alon2024optimal} related to provable generalization guarantees and the VC-dimension of contrastive learning, in the presence of a margin parameter separating the distance between an anchor-positive pair from an anchor-negative pair. Enforcing a margin is not only a practically well-motivated approach in the literature of contrastive learning and ordinal embeddings, but as we show it also leads to interesting theoretical insights. Our main result improves upon the direct application of standard low-distortion metric embedding techniques, such as the Johnson-Lindenstrauss lemma, and provides an optimal $\Theta(\max\set{n/\alpha^2,n})$ bound for the VC-dimension for learning a ground-truth embedding on $n$ data points of interest (with the lower bound holding under the assumption that $\alpha\geq \max\set{n^{-1/2},d^{-1/2}}$). Surprisingly, this is independent of the underlying dimensionality $d$ of the ground-truth embedding, which is often large. Given that our bound is optimal in $n$,  our work makes steps towards understanding the empirical success of contrastive learning. Technically, our approach uses ideas from approximation algorithms and geometric projections, coupled with learning theory bounds for learning triplet  comparisons.

\section*{AI Disclosure}
Generative AI tools were used for polishing and for feedback on exposition as well as literature search. In particular, the free version of ChatGPT helped the authors find a reference for the stronger bound on Rademacher complexity in terms of VC-dimension. The main result of the paper was obtained by the authors in April 2026. %without the use of AI
 All the proofs and technical ideas come from the authors. Generative AI tools were used to assist with the experiments. 
\newpage

\newpage

\bibliographystyle{abbrvnat}
\bibliography{references.bib}
\newpage
\appendix
\begin{appendices}
\section{Reduction to 1D}
\label{App: Reduction to 1D}
\iffalse\begin{lemma}
Let $u,v \in \mathbb{R}^d$ and let $g$ be a standard Gaussian vector.
If $\|u\| \ge (1+\alpha)\|v\|$ for some $\alpha > 0$, then
\[
\Pr{
|\langle u,g\rangle| \ge |\langle v,g\rangle| }
\;\ge\;
\frac{2}{\pi}\arctan(1+\alpha).
\]
In particular, for $\alpha\in[0,1]$:
\begin{align*}
    \Pr{
|\langle u,g\rangle| \ge |\langle v,g\rangle| }
\;\ge\;\frac{1}{2}+\frac{2}{\pi}\left(\arctan(2)-\frac{\pi}{4}\right)\alpha
\end{align*}
where $\frac{2}{\pi}\left(\arctan(2)-\frac{\pi}{4}\right)\approx0.204833\geq \frac{1}{5}$.
\end{lemma}
\fi

In this section we prove Claim \ref{clm: 1D projection}.
\begin{proof}[Proof of Claim \ref{clm: 1D projection}]
If $v=0$, the claim is immediate. Assume $v \neq 0$, which also implies $u \neq 0$. Let $\bar u = u/\|u\|$ and $\bar v = v/\|v\|$. Define $\beta = 1+\alpha$. Then the inequality $|\langle u,g\rangle| > |\langle v,g\rangle|$ holds if and only if 
$$
|\langle \bar u,g\rangle| > \frac{\|v\|}{\|u\|} \cdot |\langle \bar v,g\rangle|.
$$
Since $\|u\| > \beta\|v\|$,
it suffices to show
\[
\Pr{|\langle \bar v,g\rangle| < \beta|\langle \bar u,g\rangle| }
\;\ge\;
\frac{2}{\pi}\arctan(\beta).
\]

If $\bar v = \pm \bar u$, then the event holds with probability $1$, so we may assume that $\bar u$ and $\bar v$ are not collinear. Let $S$ be the two-dimensional subspace spanned by $\bar u$ and $\bar v$. We decompose $g$ as 
$$
g = \tilde g + g^\perp,
$$
where $\tilde g$ is the orthogonal projection of $g$ onto $S$. Since both $\bar u$ and $\bar v$ lie in $S$, we have
\[
\langle \bar u,g\rangle = \langle \bar u,\tilde g\rangle,
\quad
\langle \bar v,g\rangle = \langle \bar v,\tilde g\rangle.
\]

Let $\theta \in (0,\pi)$ be the angle between $\bar u$ and $\bar v$, measured counterclockwise, and let $\varphi$ be the angle from $\bar u$ to $\tilde g$, also measured counterclockwise. To visualize the configuration, align the first basis vector $e_1$ of $S$ with $\bar u$. Then $\bar u$ has coordinates $(1,0)$, $\bar v$ has coordinates $(\cos \theta, \sin \theta)$, and $\tilde g$ has coordinates $(\cos \varphi, \sin \varphi)$. See Figure~\ref{fig:geom-and-F}. We exclude the events $\tilde g = 0$ and $\varphi \in \{\pi/2, 3\pi/2\}$, which occur with probability $0$. Then $\varphi$ is uniformly distributed on
\[
A = (-\pi/2,\pi/2)\cup (\pi/2,3\pi/2).
\]

\begin{figure}[ht]
\centering
\begin{subfigure}[t]{0.5\textwidth}
    \centering
    \begin{tikzpicture}[scale=2.5,
      vec/.style={-{Stealth[length=5pt]}, line width=1.2pt},
      axis/.style={-{Stealth[length=4pt]}, thin, gray},
      lbl/.style={font=\small}
    ]
      \def\thetaAngle{50}
      \def\phiAngle{128}

      % --- Axes ---
      \draw[axis] (-1.25,0) -- (1.25,0) node[right, lbl] {$e_1$};
      \draw[axis] (0,-1.25) -- (0,1.25) node[above, lbl] {$e_2$};

      % --- Unit circle ---
      \draw[gray, thin, opacity=0.35] (0,0) circle (1);

      % --- Arc for theta ---
      \draw[-{Stealth[length=3pt]}, teal, thin]
        (0.30,0) arc[start angle=0, end angle=\thetaAngle, radius=0.30];

      % --- Improved theta label (moved outward) ---
      \node[lbl, teal]
        at ({0.38*cos(\thetaAngle/2)}, {0.38*sin(\thetaAngle/2)})
        {$\theta$};

      % --- Arc for phi ---
      \draw[-{Stealth[length=3pt]}, orange!80!red, thin, dashed]
        (0.50,0) arc[start angle=0, end angle=\phiAngle, radius=0.50];

      \node[lbl, orange!80!red]
        at ({0.5*cos(\phiAngle/2)}, {0.62*sin(\phiAngle/2)})
        {$\varphi$};

      % --- Vector u-bar ---
      \draw[vec, teal] (0,0) -- (1,0)
        node[below right=2pt, lbl, teal] {$\bar{u}$};

      % --- Vector v-bar ---
      \draw[vec, teal] (0,0) -- (\thetaAngle:1)
        node[above right=2pt, lbl, teal] {$\bar{v}$};

      % --- Vector g-tilde ---
      \draw[vec, orange!80!red] (0,0) -- (\phiAngle:1)
        node[above left=2pt, lbl, orange!80!red] {$\tilde{g}$};

      % --- Origin dot ---
      \fill (0,0) circle (0.02);
    \end{tikzpicture}
\end{subfigure}%
\hspace{0.02\textwidth}
\begin{subfigure}[t]{0.45\textwidth}
    \centering
    \def\betaplot{1.5}
    \begin{tikzpicture}
    \begin{axis}[
        width=0.96\linewidth,
        height=6cm,
        axis lines=left,
        xlabel={$\theta$},
        ylabel={$F(\theta)$}, % <-- fixed here
        xmin=0, xmax=3.14159,
        ymin=0, ymax=3.2,
        domain=0.05:3.09159,
        samples=250,
        xtick={0, 1.5707963, 3.14159},
        xticklabels={$0$, $\tfrac{\pi}{2}$, $\pi$},
        ytick={0, 1.5707963, 3.14159},
        yticklabels={$0$, $\tfrac{\pi}{2}$, $\pi$},
        tick align=outside,
        enlargelimits=false
    ]
      % --- F(theta) ---
      \addplot[thick, purple!80!red]
        {rad(atan((\betaplot - cos(deg(x)))/sin(deg(x)))
           - atan((-\betaplot - cos(deg(x)))/sin(deg(x))))};

      % --- Vertical reference line ---
      \addplot[dashed, gray, thin]
        coordinates {(1.5707963, 0) (1.5707963, 3.2)};

      % --- Horizontal reference line ---
      \addplot[dashed, gray, thin]
        coordinates {(0, {2*rad(atan(\betaplot))}) (3.14159, {2*rad(atan(\betaplot))})};

      % --- Minimum point ---
      \addplot[only marks, mark=*, mark size=2pt, black]
        coordinates {(1.5707963, {2*rad(atan(\betaplot))})};

      % --- Label inside plot (no overflow) ---
      \node[font=\small, anchor=north east, inner sep=2pt]
        at (axis cs:3.05, {2*rad(atan(\betaplot))})
        {$2\arctan\beta$};
    \end{axis}
    \end{tikzpicture}
\end{subfigure}

\caption{Left: vectors $\bar{u}$, $\bar{v}$, and $\tilde{g}$ in the plane, with angles
$\theta$ and $\varphi$. Right: the function $F(\theta)$ for $\beta = 1.5$,
illustrating that the minimum is attained at $\theta = \pi/2$.}
\label{fig:geom-and-F}
\end{figure}

We have
\[
\langle \tilde g,\bar u\rangle = \|\tilde{g}\| \cos\varphi,
\quad
\langle \tilde g,\bar v\rangle = \|\tilde{g}\|\cos(\varphi - \theta).
\]
Therefore,
\[
|\langle \bar v,g\rangle| < \beta|\langle \bar u,g\rangle|
\;\Longleftrightarrow\;
|\cos(\varphi-\theta)| < \beta|\cos\varphi|.
\]

Define
\[
\mathcal{E} = \left\{ \varphi \in A : |\cos(\varphi-\theta)| < \beta|\cos\varphi| \right\}.
\]
Since $\cos\varphi \neq 0$, this is equivalent to
\[
\mathcal{E} = \Big\{ \varphi \in A : 
-\beta < 
\frac{\cos(\varphi-\theta)}{\cos \varphi} < \beta \Big\}.
\]
Using $\cos(\varphi-\theta) = \cos \varphi\cos \theta + \sin \varphi \sin \theta$, we get
\[
\mathcal{E} = \Big\{ \varphi \in A : 
-\beta <
\cos \theta + \tan(\varphi)\sin \theta < \beta \Big\}.
\]
Since $\theta \in (0,\pi)$, we have $\sin\theta > 0$, and hence
\[
\mathcal{E} = \Big\{ \varphi \in A : 
\frac{- \beta - \cos \theta}{\sin \theta} < 
 \tan(\varphi)
<\frac{\beta - \cos \theta}{\sin \theta} \Big\}.
\]

Therefore, $\mathcal E$ consists of two intervals $I$ and $I +\pi$, where
\[
I = 
\Bigl(\;\arctan\Big(\frac{
-\beta-\cos \theta}
{\sin \theta}\Big),
\arctan\Big(\frac{
\beta-\cos \theta}
{\sin \theta}\Big)\;\Bigr).
\]
The probability of $\varphi\in \mathcal{E}$ equals 
\[
\frac{|\mathcal{E}|}{2\pi}
=
\frac{|I|}{\pi}
=
\frac{1}{\pi}\Big[\arctan\Big(\frac{
\beta-\cos \theta}
{\sin \theta}\Big) - \arctan\Big(\frac{
-\beta-\cos \theta}
{\sin \theta}\Big)\Big].
\]
Denote the expression in the square brackets on the right hand side by $F(\theta)$ (see Figure \ref{fig:geom-and-F}):
\[
F(\theta)
=
\arctan\Big(\frac{\beta-\cos \theta}{\sin \theta}\Big)
-
\arctan\Big(\frac{-\beta-\cos \theta}{\sin \theta}\Big).
\]
Then the probability above is $F(\theta)/\pi$. The derivative of $F(\theta)$ equals
\[
F'(\theta)
=
\frac{1-\beta\cos\theta}{1-2\beta\cos\theta+\beta^2}
-
\frac{1+\beta\cos\theta}{1+2\beta\cos\theta+\beta^2}.
\]
Write it as
\[
F'(\theta) = G(-\beta \cos\theta)- G(\beta \cos\theta),\quad\text{where }
G(x) = \frac{x+1}{2x+1 + \beta^2}. 
\]
Function $G(x)$ is increasing for $x > -(\beta^2 +1)/2$ because 
$$G(x)= \frac{1}{2}\Bigl[1 - \frac{\beta^2-1}{2x + 1 + \beta^2}\Bigr]$$
and $\beta^2 - 1 > 0$. Observe that
$\beta \cos \theta > -(\beta^2 +1)/2$ and $-\beta \cos \theta > -(\beta^2 +1)/2$. Hence $F'(\theta) < 0$ when $\cos\theta > 0$ and $F'(\theta) > 0$ when $\cos\theta < 0$. Therefore, $F$ is minimized at $\theta = \pi/2$. Consequently,
\[
F(\theta) \ge F(\pi/2)
=
\arctan(\beta)-\arctan(-\beta)
=
2\arctan(\beta).
\]
Thus
\[
\Pr{\varphi\in\mathcal E}
=
\frac{F(\theta)}{\pi}
\ge
\frac{2}{\pi}\arctan(\beta).
\]
This proves the first part of the claim. We now use that the function $\arctan(x)$ is concave for $x\geq 0$ meaning that in any interval, it can be lower bounded by the secant line between the endpoints of the interval. In particular, we get that for $\alpha\in [0,1]$:
\begin{align*}
    \frac{2}{\pi}\arctan(\beta)&\geq \frac{2}{\pi}\arctan(1+\alpha)\\
    &\geq \frac{2}{\pi}\left((\arctan(2)-\frac{\pi}{4})\alpha+\frac{\pi}{4}\right)\\
    &=\frac{1}{2}+\frac{2}{\pi}\left(\arctan(2)-\frac{\pi}{4}\right)\alpha,
\end{align*}
where $\frac{2}{\pi}\left(\arctan(2)-\frac{\pi}{4}\right)\approx0.204833\geq \frac{1}{5}$.
\end{proof}
\section{Additional Experiments}\label{app:additional-experiments}

In this section we provide the additional experiments referenced from Section~\ref{sec:experiments}. All experiments run on a standard CPU and can be completed within an hour.

\subsection{Synthetic \texorpdfstring{$\alpha$}{alpha}-sweep}\label{app:synth-alpha}

In Section~\ref{sec:experiments}, we fixed $\alpha$ and varied $d$. Here we do the opposite: on the synthetic setup, we fix $d=50$ to match the ground-truth ambient dimension, and sweep $\alpha\in\{0.1,0.2,0.4\}$, $n\in\{100,300,1000\}$, $m\in\{100,300,10^3,{3\!\cdot\!10^3},10^4,{3\!\cdot\!10^4},10^5\}$ over 5 seeds. In Figure~\ref{fig:synthetic}, we plot the empirical error $\hat\eps$ against the effective sample size $m\alpha^2/n$ predicted by Theorem~\ref{thrm: main_theorem}. All nine $(n,\alpha)$ curves collapse onto a single trajectory and drop from random-guess error to near zero once $m\alpha^2/n\gtrsim 1$. If we plot against $m/n$ instead, the curves separate by margin, so the $\alpha^2$ factor in our obtained bound is quite precise.

\begin{figure}[ht]
\centering
\includegraphics[width=0.48\linewidth]{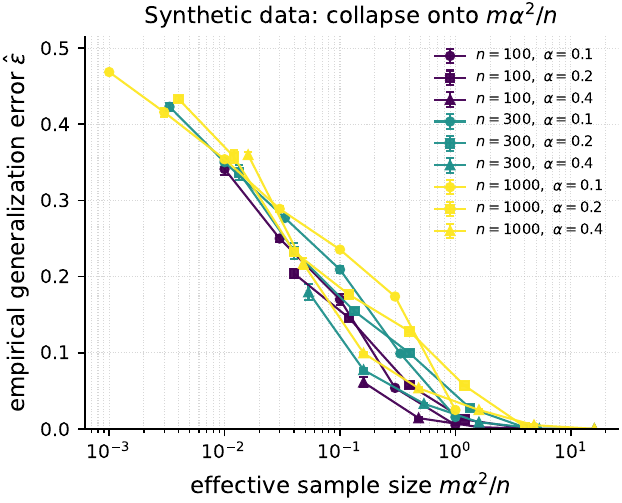}\hfill
\includegraphics[width=0.48\linewidth]{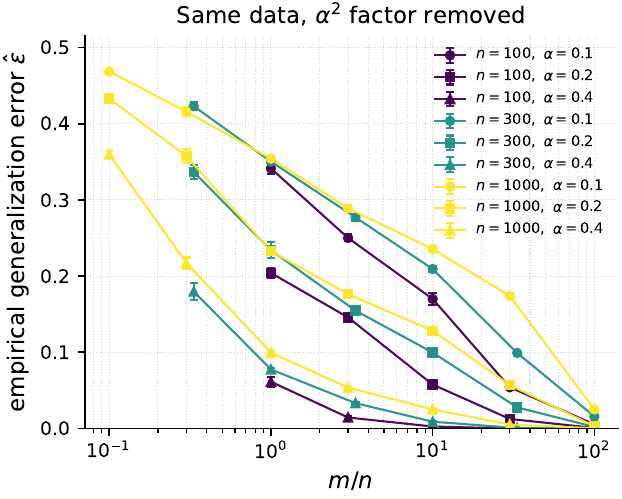}
\caption{Synthetic $\alpha$-sweep. Left: all nine $(n,\alpha)$ curves collapse when plotted against the predicted effective sample size $m\alpha^2/n$. Right: without the $\alpha^2$ factor the curves separate by margin $\alpha$.}
\label{fig:synthetic}
\end{figure}

\subsection{Real data \texorpdfstring{$\alpha$}{alpha}-sweep}\label{app:real-alpha}
On the real-data setup, we fix $d=64$, and sweep $\alpha\in\{0.1,0.2,0.4\}$, $n\in\{500,2000,5000\}$, $m\in\{300,10^3,{3\!\cdot\!10^3},10^4,{3\!\cdot\!10^4},10^5\}$ over 5 seeds. Figure~\ref{fig:real} reproduces both diagnostics: plotting against $m\alpha^2/n$ orders the curves and they decay together to zero error, while removing $\alpha^2$ separates them by $\alpha$. Compared to the synthetic data, two things look different: the empirical errors are well below their synthetic counterparts at the same $m\alpha^2/n$, and at fixed $m\alpha^2/n$ larger $n$ gives \emph{lower} rather than higher error. Both behaviors are what we would expect when the linear head explores a more structured subspace of $\mathcal{H}(n,d,\alpha)$, rather than randomly generated data (it has only $O(d \cdot 512)$ parameters, independent of $n$). A smaller class generalizes better than our worst-case bound predicts, and at fixed $m\alpha^2/n$, a larger value of $n$ means more triplets constraining the same fixed-size parameter matrix.

\begin{figure}[ht]
\centering
\includegraphics[width=0.48\linewidth]{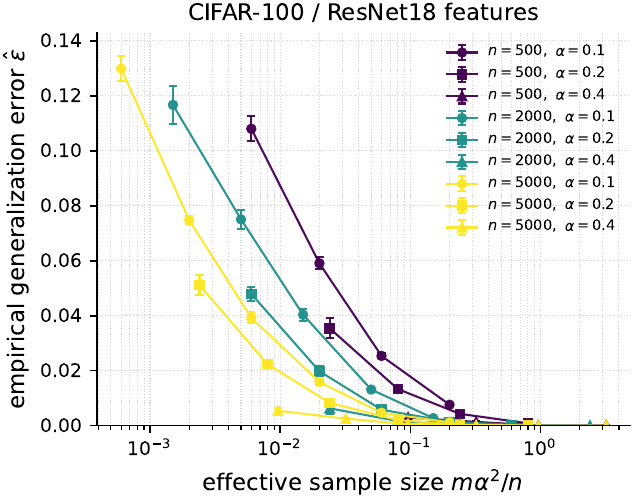}\hfill
\includegraphics[width=0.48\linewidth]{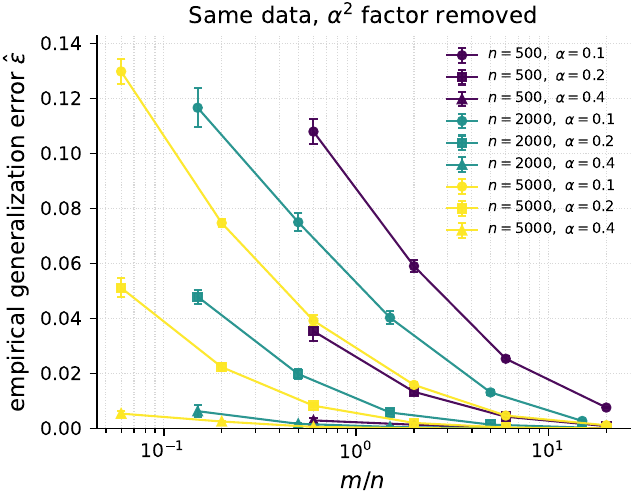}
\caption{Real-data $\alpha$-sweep. Left: collapse against $m\alpha^2/n$. Right: with the $\alpha^2$ factor removed.}
\label{fig:real}
\end{figure}

\subsection{Dimension-independence across \texorpdfstring{$\alpha$}{alpha}}\label{app:dim-multi-alpha}

The dimension-independence experiment in Section~\ref{sec:experiments} uses a single representative margin $\alpha=0.2$. We repeat this experiment at $\alpha\in\{0.05,0.1,0.4\}$ on the synthetic and real-data setups (Figure~\ref{fig:dim-multi-alpha}). The three $d$-curves overlap within each $n$-trajectory across the full range in both cases. The one exception is mild underfitting at $d=16$ when $\alpha$ is small and $m\alpha^2/n$ is large: 16 dimensions cannot faithfully embed the ground truth ($d^\star=50$ synthetic, up to 512 on CIFAR-100), and tight margins are more sensitive to the resulting distortion.

\begin{figure}[ht]
\centering
\includegraphics[width=\textwidth]{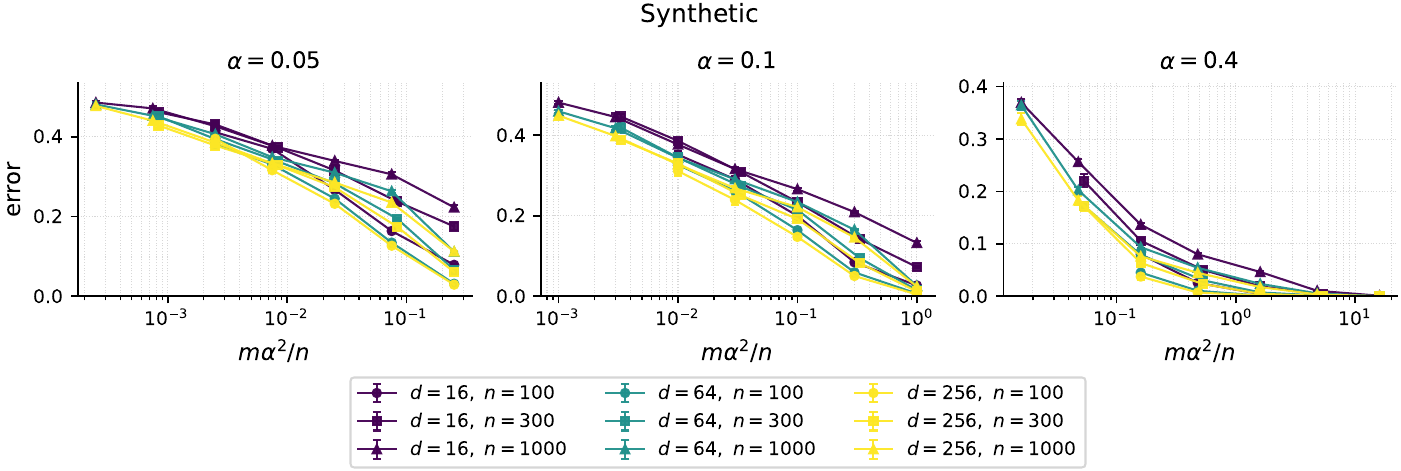}\\[1em]
\includegraphics[width=\textwidth]{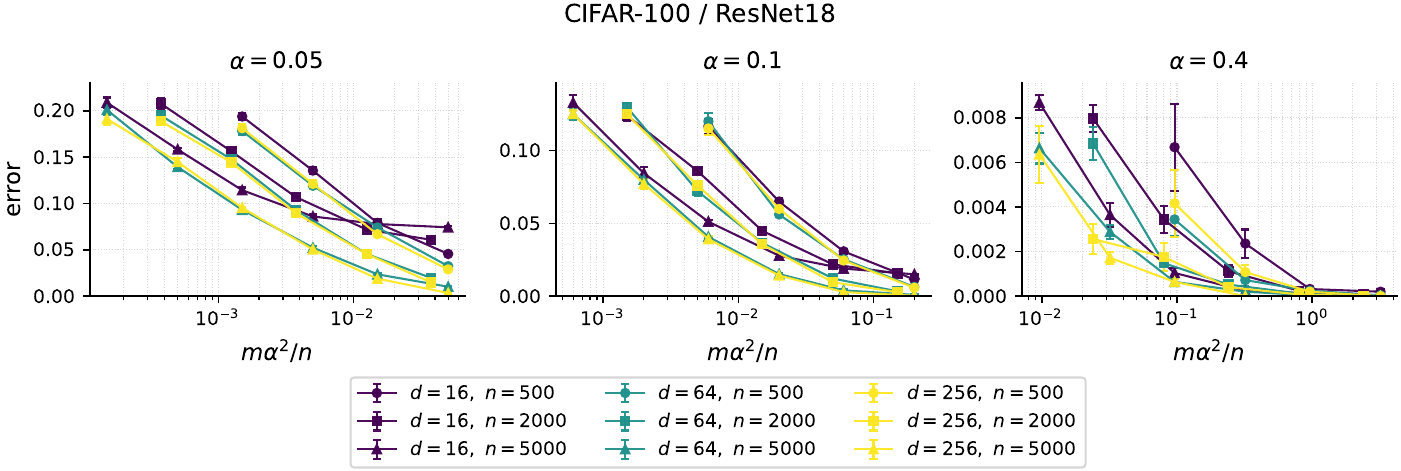}
\caption{Dimension-independence across $\alpha$. Top: synthetic setup. Bottom: CIFAR-100 / ResNet18.}
\label{fig:dim-multi-alpha}
\end{figure}

\newpage

\subsection{Margin sanity check}\label{app:margin-sanity}

The hypothesis class $\mathcal{H}(n,d,\alpha)$ is defined by embeddings whose output itself satisfies the $(1+\alpha)$-separation; the multiplicative-margin loss is a continuous surrogate for this constraint. As a post-hoc check that the learned $\hat\phi$ actually \textit{lives in the hypothesis class}, we measure the fraction of held-out triplets on which the learned embedding satisfies $\norm{\hat\phi(i)-\hat\phi(k^-)}\ge (1+\alpha)\norm{\hat\phi(i)-\hat\phi(j^+)}$ (Figure~\ref{fig:margin-sanity}). The fraction climbs from around $0.5$ at the smallest $m$ to above $0.95$ once $m\alpha^2/n\gtrsim 1$. So whenever the empirical error is small, the learned embedding really is $(1+\alpha)$-separated on most test triplets.

\begin{figure}[ht]
\centering
\includegraphics[width=0.55\textwidth]{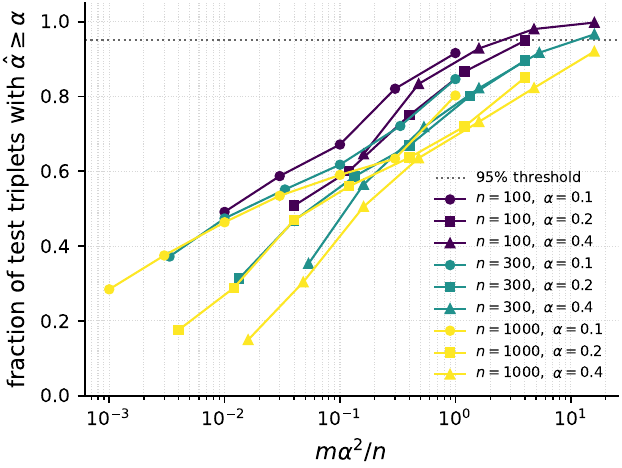}
\caption{Margin sanity check. Once $m\alpha^2/n\gtrsim 1$, the learned embedding satisfies the hypothesis-class constraint on $>\!95\%$ of test triplets.}
\label{fig:margin-sanity}
\end{figure}

\end{appendices}

\end{document}